\titleformat*{\section}{\Large\bfseries}
\titleformat*{\subsection}{\large\sc}
\titleformat*{\subsubsection}{\itshape}
\begin{document}

\title{{\bf Jump-starting coordination in a stag hunt:\\ Motivation, mechanisms, and their analysis}}

\author{{\large{ Ioannis Avramopoulos}}}

\date{}

\maketitle

\thispagestyle{empty} 

\newtheorem{definition}{Definition}
\newtheorem{proposition}{Proposition}
\newtheorem{theorem}{Theorem}
\newtheorem{corollary}{Corollary}
\newtheorem{lemma}{Lemma}
\newtheorem{axiom}{Axiom}
\newtheorem{thesis}{Thesis}

\vspace*{-0.2truecm}

\begin{abstract}
The stag hunt (or assurance game) is a simple game that has been used as a prototype of a variety of social coordination problems (ranging from the social contract to the adoption of technical standards). Players have the option to either use a superior cooperative strategy whose payoff depends on the other players' choices or use an inferior strategy whose payoff is independent of what other players do; the cooperative strategy may incur a loss if sufficiently many other players do not cooperate. Stag hunts have two (strict) pure Nash equilibria, namely, universal cooperation and universal defection (as well as a mixed equilibrium of low predictive value). Selection of the inferior (pure) equilibrium is called a {\em coordination failure}. In this paper, we present and analyze using game-theoretic techniques mechanisms aiming to avert coordination failures and incite instead selection of the superior equilibrium. Our analysis is based on the solution concepts of Nash equilibrium, dominance solvability, as well as a formalization of the notion of ``incremental deployability,'' which is shown to be keenly relevant to the sink equilibrium.
\end{abstract}

\section{Introduction}

The motivation for this paper is a phenomenon that manifests at the core architecture of the Internet, namely, that technologies (e.g., network protocols) emerge promising benefits for everyone over the status quo, however, adoption fails. Consider, for example, Internet routing, i.e., the network function responsible for delivering data (packets) between Internet hosts along network paths. That BGP (the routing protocol providing global connectivity) is vulnerable to accidental misconfiguration and malicious attack is not a particularly novel observation, but that Internet routing remains vulnerable despite significant efforts by the research community, governmental organizations, and service providers to effect change in BGP is arguably worrisome.

A recurring theme in the enterprises to effect fundamental change in the Internet is a frequently encountered dependence of their probability of success on {\em positive externalities} in the following sense: Unilateral action (by a single individual or organization) is, generally, harmful to the corresponding individual or organization, and multilateral effort is necessary for the venture to be effective. Consider, for example, the problem of deploying a secure version of BGP such as S-BGP \cite{Kent}. Unilateral S-BGP deployment is costly to the corresponding organization (typically an ``autonomous system'') without providing security benefits, however, as the autonomous systems that deploy S-BGP increase security benefits quickly ramp up to outrun those of BGP.

This paper rests on the thesis that the adoption of emerging technologies in a social system is a matter that transcends technical challenges and critically relies on the {\em incentives} that shape the adoption environment (as measured by the utilities that adoption decisions confer to {\em individual} members of the social system, rather than, for example, on aggregate social utility metrics).

We note that we are not the first to advocate such a thesis: Long prior to the Internet's innovation stalemate, there was a general (pessimistic) belief that the question of {\em managing common-pool resources,} (that is, resource allocation problems modeled after patterns akin to the prisoner's dilemma) is inherently defiant of admitting technical, if any, solutions \cite{Hardin}. (This pessimistic belief was later refuted, in part, by Ostrom's empirical work on governing structures intended to solve collective action problems \cite{Ostrom}.) In a more recent manifestation of this thesis, we believe that, by and large, the networking community's effort to render Internet innovation {\em incrementally deployable} (a term that will play an instrumental role in our analysis in the sequel) amounts to acceptance of a homologous, if not identical, premise. In the same vein, Chan {\em et al.} \cite{Adoptability} propose an {\em adoptability model} of technological innovation (of Internet-based technologies such as routing protocols) that factors player utilities, and leverage their model to experiment with protocol design.

\subsection{Our thesis}

The eventide of the 20th and the advent of the 21st century was signaled by the emergence of socially beneficial peer-produced technologies such as the Web, Linux, Wikipedia, and Twitter whose success depends on positive externalities. The proliferation of these technologies became possible owing to the widespread penetration of the Internet, which, in more general terms, has incited the manifestation of techno-economic phenomena of an unprecedented nature (cf. {\em social production} \cite{Benkler}) challenging traditional market-based production paradigms as evidenced by Wikipedia's or open-source software's success. Similarly, the production of the Internet as a global communications service is, arguably, itself a techno-economic phenomenon of an unprecedented nature; the Internet is not a technology in the common understanding of this term, but rather an artifact where {\em social phenomena} (akin to those that manifest in social production) are critical to its operation. 


The following question emerges naturally: How is it possible to explain Wikipedia's success vis-\'a-vis the inertia in deploying countermeasures against routing attacks, for example? This question does not admit a simple answer, but the following model sheds light on the problem at hand.

The {\em stag hunt} (also sometimes referred to in the literature as {\em assurance game}) is a simple game that has been used as a model in a variety of social coordination problems most prominently among them it has been used as a prototype of the social contract. The model appears in Rousseau's {\em A Discourse on Inequality}. In a stag hunt, each player must choose one of two strategies, namely, to either {\em cooperate} in choosing a superior strategy whose benefit, however, critically depends on whether other players choose alike or instead {\em defect} and choose a less attractive status quo strategy that reaps moderate benefits independently of what other players do; selecting the superior strategy may incur a loss if sufficiently many other players do not cooperate.

Like all coordination games, stag hunts have multiple equilibria; in particular, universal cooperation in using the (Pareto) superior strategy and universal defection from using that strategy (using instead the inferior status quo strategy) are the only pure-strategy Nash equilibria in this game.\footnote{There is also a mixed equilibrium that provides payoffs inferior than either of the pure equilibria, is deemed unstable, and is generally ignored in the literature.} Given the payoff asymmetry of the superior and inferior equilibrium, which equilibrium is selected is a question of evident interest in the analysis of the game  that has been extensively studied both theoretically and empirically; predicting the outcome of a stag hunt is a notoriously hard {\em equilibrium selection problem} (cf. \cite{HS}). 

The stag hunt aptly captures the incentive structure in environments where superior technologies whose success depends on positive externalities compete with (inferior) incumbent (status quo) technologies. However, equilibrium selection theory is not general in a position to provide accurate predictions on the outcome of technological competition whose incentive structure falls into the aforementioned pattern. Furthermore, even if game theory reaches a mature enough state wherein such predictions become possible, designers may not, in general, find themselves in the position to rely on evolutionary forces alone to supplant parochial architectural elements. But that is not to say we are left without recourse in fostering the proliferation of superior technologies.

I believe computer scientists have overlooked an important piece of the puzzle regarding effecting architectural innovation in the Internet, namely, that the Internet is a {\em population} of independent self-interested actors who need to {\em coordinate} their activities to effect change. In this vein, this paper proposes a theory of architectural evolution which views technologies in the abstract as black boxes and posits that the failure to effect innovation in the (generally acknowledge as parochial) basic architecture of the Internet amounts to {\em coordination failures,} that is, equilibrium selection phenomena in coordination games, which can be reasonably approximated by stag hunts.

\subsection{Diffusing technological innovation}

This abstract perspective is rather common in models of the diffusion of innovation. Quoting Arthur \cite{Technology} ``The people who have thought hardest about the general questions of technology have mostly been social scientists and philosophers, and understandably they have tended to view technology from the outside as stand-alone objects. \ldots \hspace{0.2 mm} Seeing technology this way, from the outside, works well enough if we want to know how technologies enter the economy and spread within it.'' I believe that the bottleneck in Internet innovation is not a lack of creativity in devising new technologies but rather a lack of creativity in profitably deploying them, a belief that justifies the black box approach to modeling technologies adopted in this paper.

The stag hunt has been used as a model of various social coordination problems (for example, see \cite{PlayingFair, SocialContract, Skyrms, Medina}). Coordination games have also been used as a model of telecommunication networks and the Internet \cite{Economides1}. This paper's thesis should, therefore, not come as a surprise to economists and social scientists (and it should not come as surprising even to some computer scientists; for example, see \cite{Feamster}). But what is particularly appealing about this thesis is that it suggests a course of action that radically departs from the existing innovation efforts to change the Internet architecture. 

The course of action I propose is that innovation should be assisted by means of {\em social mechanisms} or {\em institutions} that should work in tandem with the emerging technologies toward their adoption. The term {\em institution} is to be understood precisely as a social mechanism whose goal is to coordinate activity among the members of a population \cite{Schotter}. Examples of institutions are the traffic code, markets, or regulations. That is not to say that I propose to effect innovation through markets or regulations: Both mechanisms are important institutions deeply rooted in the social structures our societies are based upon, however, casting the problem of innovating at the core of the Internet as a market design or as a regulatory intervention problem does in my view take full effect of the idiosyncratic characteristics of this adoption environment. 


For example, attempting to intervene in the adoption environment through direct regulation is problematic for a variety of reasons: First, it requires from engineers to convince the lawmaking authorities of nations with significant Internet presence of the importance of emerging technologies. Second, it requires solving a much harder international coordination problem among the lawmakers. Third, it is well-known that regulations are not equally effective in the nation-states with significant Internet presence. Finally, this process has to be repeated every time innovation is desirable.

\subsection{Our contributions}

Our first contribution in this paper is to propose two {\em coordination mechanisms,} that is, social institutions, that facilitate selection of the superior equilibrium in a stag hunt. Our use of the term `mechanism' suggests keen relevance of our approach to {\em mechanism design,} and although there are many similarities, at least at the conceptual level, there are also important differences the most important of which is that we are not attempting to entice a truthful revelation of preferences among outcomes (as is the typical situation in the design of mechanisms and their theoretical analysis), but rather to incur an {\em equilibrium transition} from an inferior to a superior outcome, assuming that the incentive structure (and, therefore, the respective preferences) is well-known. The first mechanism we propose relies on the presence of an external to the players entity that can assume the role of an {\em insurance carrier} having the capability to insure costly efforts while the second mechanism, one based on an {\em election}, assumes the ability to enforce voluntary player commitments. We note that these mechanisms are not specific to the architectural evolution problem that motivated our investigation, but are rather generic and apply to a broader set of coordination problems.

Our second contribution is a game-theoretic analysis of the aforementioned mechanisms. To that end, we leverage a variety of techniques, in particular, Nash equilibrium theory, the game-theoretic solution concept based of weak and strict iterated dominance, as well as a solution concept that we devise based on the notion of {\em incremental deployability} that originated in the networking literature to (colloquially) mean that a technology provides benefits to its early adopters vis-\'a-vis the incumbent environment in which it competes. In this paper, we rigorously formalize incremental deployability in a manner that renders this concept closely related to the notion of the {\em sink equilibrium,} a game-theoretic solution concept proposed by Goemans {\em et al.} \cite{Goemans}. 

The sink equilibrium is typically, if not exclusively, used in the literature based on a notion of dynamics characterized by better or profitable best responses. Such dynamics give rise to the pure Nash equilibrium as their equilibrium solution concept. In the interest of analyzing coordination mechanisms (in particular, the election mechanism) we develop a solution concept characterized by a notion of dynamics wherein evolution is allowed to ``drift'' according to unilateral player deviations that are not harmful giving rise to a new equilibrium solution concept in pure strategies. We call this solution concept a {\em strongly maximal equilibrium}. Such equilibria are necessarily pure Nash equilibria, but the converse is generally false. A distinguishing feature of this equilibrium concept is that strongly maximal equilibria can manifest in equilibrium classes or {\em clusters}. We argue that such an equilibrium concept, is as, if not more, plausible, than the pure Nash equilibrium. 

Economists as well as, more recently, algorithmic game theorists have been concerned with a class of games wherein better and best response dynamics necessarily give rise to pure Nash equilibria as the outcome of long-term asymptotic behavior. Such games are known as {\em weakly acyclic games} in the sense that, although cyclic behavior can manifest in their dynamics, such behavior is necessarily transient. In the interest of analyzing the election mechanism, we introduce a class of games that we call {\em weakly ordinally acyclic} capturing the aforementioned strongly maximal equilibria as the outcome of long-term asymptotic dynamics; we show that the election mechanism applied to the stag hunt induces a game that falls into this class. We finally note that, in the course of our analysis, we prove that the stag hunt is a {\em weakly (ordinally) acyclic game,} a result that, to the extent of our knowledge, has not appeared before in the literature.

\subsection{Overview of the rest of the paper}

The purpose of the next section (Section \ref{motivation}) is to provide further motivation for the problem we are tackling in this paper. Section \ref{coordination_mechanisms} introduces the concept of a coordination mechanism, an institution aiming to facilitate selection of the superior equilibrium in a stag hunt, first in abstract terms and then by instantiating it with particular examples that seem to fulfill the definition of such a mechanism. Naturally, whether an institution qualifies to be a coordination mechanism depends on the analytical method employed to predict the effect of that institution on the incentive structure of the adoption environment. Various methods can be employed to that effect, which are presented in Section \ref{analytical_methods}. The goal of Sections \ref{on_maximality} and \ref{stag_hunt_is_weakly_acyclic} is to add further support to the analytical approach being pursued in this paper, while the (aforementioned) examples of plausible coordination mechanisms, namely, a mechanism based on insuring the investments associated with cooperative strategies and a mechanism based on an election, are analyzed in Section \ref{Analysis}. Section \ref{discussion} places game-theoretic mechanisms in a broader context, relating our approach to tolls in road traffic transportation networks. Related work is further discussed in Section \ref{other_rel_work} and Section \ref{conclusion} concludes this paper.

\section{Motivation}
\label{motivation}

Technology (in contrast to natural or biological phenomena) is something humans design and, therefore, we would reasonably expect to know precisely how it works. That may indeed be true for some technologies such as processors but it is certainly not true for all technologies. I would say that the Internet belongs to the category of those technologies of which we are trying to understand the principles of their operation in the same manner that we are trying to understand the principles of operation of natural or biological phenomena. This paper is motivated, in part, by an effort to understand in a rigorous manner the nature of the Internet, how it evolves, whether we should intervene in its evolution, and how we might attempt such an intervention.

\subsection{The ossification of the Internet}

The motivation for this paper is a phenomenon that manifests at the core architecture of the Internet (that is, the TCP/IP architecture), namely, that despite significant attempts from the networking community to effect change, this core architecture is defiant of these efforts. Quoting Ratnasamy {\em et al.} \cite{RSM}: ``In the early days of the commercial Internet (mid 1990's) there was great faith in Internet evolution. \ldots The remarkable success of the Internet surpassed our wildest imagination, but our optimism about Internet evolution proved to be unfounded. \ldots Thus the ISPs which where once thought the agents of architectural change are now seen as the cause of the Internet impasse \ldots'' The core architecture of the Internet has facilitated significant innovation at the link layer (e.g., the rapid transition to 3G and 4G systems in wireless and the rapid transition to optical technologies for fixed access) as well as the application layer (e.g., Google, Facebook, BitTorrent), however, it is increasingly being held responsible for stifling the emergence of radical new applications.\footnote{\url{http://www.guardian.co.uk/technology/2012/apr/29/internet-innovation-failure-patent-control}}\\

The Internet started out as a research experiment whose stunning success led to an expansion of unprecedented scale. The Internet grew precipitously without giving room to the community for introspection on the design choices that had been made; brilliant though as these design choices were in many respects but confined to research experimentation in others, the designers could not do more than observe a research experiment escape the lab to weave into our societal fabric.

Users and operators alike of the infrastructure that eventuated are mounting calls for significant change toward improving performance, quality of service, availability, and security of the services offered by the infrastructure. However, the Internet has been {\em ossified} and despite significant efforts the architecture hasn't changed in decades. This, perhaps surprising and, to a large extent, inexplicable, ossification is the motivation for this work. We cannot change the basic architecture of the Internet like we can change any other technology; in fact, by and large, the Internet does not qualify to be a technology in the commonly understood sense of this term; atop the technical design has evolved an institutional structure that complements and is complemented by this design, a structure that we cannot turn a blind eye on whether to effect architectural change or to foster the growth of new technologies. 

As another case-in-point (on par with secure routing) that effecting changes to the core architecture of the Internet is a matter that transcends technical challenges I use the adoption of IPv6. This is an emerging technology operating at the {\em network layer}. The network layer's basic functions are addressing (that is, identifying hosts with an address so that they are reachable from other hosts) and routing (that is, establishing communication paths between hosts so that data can be forwarded from source to destination). The predominant version of IP operating in the Internet is IPv4 whose address space is scarce, however, and has already been depleted.\footnote{Various methods have been developed and deployed to address this limitation such as {\em network address translation,} however, these methods are heavily criticized as violating basic architectural principles.} IPv6 provides a significantly larger address space, but its adoption is progressing at a disappointingly slow pace despite concerted efforts by various players to expedite its diffusion.


\subsection{Efforts to innovate}

The difficulty, or perhaps impossibility, of effecting change at the core of the Internet architecture had generated much frustration at the networking research community. Quoting an unpublished position statement authored by Jennifer Rexford in 2004\footnote{http://www.cs.princeton.edu/~jrex/position/disrupt.txt} that has since been influential in my thinking: ``In the past several years, the networking research community has grown increasingly frustrated with the difficulty of effecting substantive change in the Internet architecture.  This frustration is responsible, in part, for a shift in focus to research in ``green-field" environments  such as overlays, peer-to-peer, sensor networks, and wireless networks.  Although these topics are compelling and important in their own right, I would argue that the underlying Internet infrastructure, IP protocols, and operational practices are plagued with serious problems that warrant  significant research attention from our community.''

Jennifer Rexford then goes on to argue in the same position statement that: ``However, having revolutionary impact on the Internet architecture is
extremely challenging because of the commercialization of the network
and the large installed base of legacy equipment running the existing
protocols.  As a result, many studies of the Internet's ``underlay"
focus on collecting and analyzing measurement data \ldots  These
kinds of studies are certainly interesting and important \ldots Yet, we
need to find ways to effect more fundamental change in the years ahead.''

The networking community's efforts toward effecting fundamental change in the Internet have been channeled toward two distinct directions. Computer scientists have become accustomed to attributing the innovation slump, on one hand, to the lack of appropriate technical characteristics in the existing technologies and attempt to counterbalance such limitations through devising technologies that are easier to implement, have better performance, and have better reliability and security, and that are incrementally deployable. However, these efforts have been to no avail. On the other hand, the community is creating large scale experimentation environments capable of even attracting a real user base such as the Global Environment for Network Innovations (GENI), an effort driven by the belief that adoption of proposals to effect change is hindered by the inevitable lack of deployment experience for untested in the field ideas.\footnote{An influential argument for building such large scale testbeds was articulated by \cite{Impasse}.} Although deployment experience is a necessary condition for the widespread adoption of a technology to take place, I argue that it is by no means sufficient; in this paper I proffer a radically different to the problem at hand perspective, and propose an (also radically different) means by which it can be addressed.

I should note, however, that the networking community is not oblivious to the fact that effecting architectural innovation is a matter that transcends technical specifications. In an influential paper, Anderson {\em et al.} \cite{Impasse} emphasize that ``In addition to requiring changes in routers and host software, the Internet's multiprovider nature also requires that ISPs jointly agree on any architectural change. [changing paragraph] The need for consensus is doubly damning: Not only is reaching agreement among the many providers difficult to achieve, attempting to do so also removes any competitive advantage from architectural innovation.''

Much like deployment experience is a necessary condition for adoption of an emerging technology to take place, {\em agreement} on the benefits of an emerging technology is also a necessary condition for adoption, and by no means sufficient. As Olson \cite{Olson} lucidly argues in a generally acceptable (and widely celebrated) viewpoint (that has its critics however), groups aiming to take collective action do not form through mere agreement of their members on the possible benefits of furthering their objectives or their members' likemindedness; incentives play a significant role in the formation of such groups. Olson's perspective forms the basis of this paper's thesis.

\section{Coordination mechanisms}
\label{coordination_mechanisms}

In this section, we present the main idea introduced in this paper, namely, that of a ``coordination mechanism,'' first in abstract terms, which we then instantiate using particular examples. In the setting of Internet architecture, these mechanisms could be employed to overcome the deployability barriers that innovative architectures whose success depends on positive externalities face. In this setting, coordination mechanisms correspond to a radical departure from ongoing innovation efforts. However, we should note that the application span of such mechanisms extends beyond the Internet to other endeavors facing a similar incentive structure. In fact, as argued in the sequel, the notion of a coordination mechanism is, in a sense, already being used in practical applications such as {\em crowdfunding} with great success (although viewing crowdfunding as a coordination mechanism in the sense the term is used in this paper is a novel, to the extent of our knowledge, perspective in the corresponding literature). Let us, however, start with preliminaries on strategic games.

\subsection{Preliminaries}

Let us begin with the definition of games in (finite) {\em strategic form} (also called a {\em normal form}). To define a game in this form, we need to specify the set of players, the set of strategies available to each player, and a utility function for each player defined over all possible combinations of strategies that determines a player's payoffs. Formally, a  strategic-form game $\Gamma$ is a triple $$\Gamma = (I, (S_i)_{i \in I}, (u_i)_{i \in I}),$$ where $I$ is the set of players, $S_i$ is the set of pure strategies available to player $i$, and $u_i: S \rightarrow \mathbb{R}$ is the utility function of player $i$ where $S = \times_i S_i$ is the set of all strategy profiles (combinations of strategies). Let $n$ be the number of players. We often wish to vary the strategy of a single player while holding other players' strategies fixed. To that end, we let $s_{-i} \in S_{-i}$ denote a strategy selection for all players but $i$, and write $(s'_i, s_{-i})$ for the profile
\begin{align*}
(s_1,\ldots,s_{i-1}, s'_i,s_{i+1},\ldots, s_n).
\end{align*}
A pure strategy profile $\sigma^*$ is a {\em Nash equilibrium} if, for all players $i$,
\begin{align*}
u_i(\sigma_i^*, \sigma_{-i}^*) \geq u_i(s_i, \sigma_{-i}^*) \text{ for all } s_i \in S_i.
\end{align*}
That is, a Nash equilibrium is a strategy profile such that no player can obtain a larger payoff using a different strategy while the other players' strategies remain the same. If unilateral deviations from a Nash equilibrium are necessarily harmful, the Nash equilibrium is called {\em strict,} and it is called {\em weak} otherwise. The previous definitions concern Nash equilibria in pure strategies, however, we note that Nash equilibria admit an analogous definition assuming players may use {\em mixed strategies,} that is, probability distributions over their respective pure strategy sets. Nash equilibria in mixed strategies are known to always exist in strategic form games, in contrast to pure Nash equilibria.

\subsection{Definition of the stag hunt}

Skyrms \cite{Skyrms} lucidly defines the stag hunt as follows: ``Let us suppose that the hunters [in a group] each have just the choice of hunting hare or hunting deer. The chances of getting a hare are independent of what others do. There is no chance of bagging a deer by oneself, but the chances of a successful deer hunt go up sharply with the number of hunters. A deer is much more valuable than a hare. Then we have the kind of interaction that is now generally known as the stag hunt.''

In the setting of architectural evolution, deer is the emerging technology, hare the incumbent, and the hunters are the potential adopters. That the chances of getting a hare are independent of what others do reflects the often realistic assumption that the incumbent neither benefits nor suffers from adoption of the emerging technology. That there is no chance of bagging a deer by oneself reflects that the emerging technology is not incrementally deployable, that the chances of a successful deer hunt go up sharply with the number of hunters reflects positive externalities, and that a deer is much more valuable than a hare reflects the superiority of the emerging technology. 

The stag hunt has two Nash equilibria in pure strategies, namely, universally adopting the superior cooperative strategy and universally defecting from using that strategy. The manifestation of the latter equilibrium is known as a {\em coordination failure}. The stag hung also has a mixed Nash equilibrium providing payoffs inferior to those of either of the pure Nash equilibria, deemed unstable, and generally ignored in the literature.

To illustrate the model, let us give an example on the problem of effecting architectural innovation in the Internet infrastructure that is organized in {\em autonomous systems,} which are administratively independent organizations offering Internet connectivity to endusers. To provide this function autonomous systems interconnect in a rather general topology (which satisfies certain statistical properties such as the power law distribution). We may represent this topology as a graph, which at the moment has approximately $50,000$ nodes.

In a typical adoption environment (in particular, for {\em network-layer} innovation), players are the administrative authorities of these autonomous systems---we use $i$ as the running index of a player. Each player has a choice of either adopting (strategy $A$) or defecting (strategy $D$). The payoff of an adopter depends on the connected component of adopters in the graph to which the adopter's autonomous system belongs, and this payoff goes up sharply with the size of this component. The payoff of a player who defects is zero. If the size of an adopter's component is small enough, the adopter's payoff may be negative (meaning that the deployment effort costs more than the benefit).

Denoting a combination of the players' choices (strategy profile) by $s$, writing this from the perspective of player $i$ as $(s_i, s_{-i})$ where $s_{-i}$ is a combination of strategies for all players except $i$, and denoting the payoff function of player $i$ as $u_i(\cdot)$, the previous discussion can be captured in the following simple formula:
\[
  u_i(s_i, s_{-i}) = \left\{ 
  \begin{array}{l l}
    \beta_i(s) - \gamma_i, & \text{$s_i=A$}\\
    0, & \text{$s_i=D$.}\\
  \end{array} \right.
\]
where $\beta_i(s) - \gamma_i$ is the net benefit of adoption (which may be negative depending on the outcome of the game) and $\gamma_i > 0$ is the deployment cost.

We provide below a formal definition of the stag hunt that we will use in the analysis in following sections; to the extent of our knowledge this is the first formal definition of this game (although it closely reflects the aforementioned description of \cite{Skyrms}).

\begin{definition}[Stag hunt]
We say that $\Gamma = (I = \{1, \ldots, n\}, (S_i)_{i \in I}, (u_i)_{i \in I})$ is a {\em stag hunt} if, for all $i \in I$, $S_i = \{A, D\}$ (that is, all players have a common pair of strategies) and the payoff of choosing $A$ (hunting deer) to each player that hunts deer is an increasing function of the players that hunt deer, whereas the payoff of choosing $D$ (defecting from hunting deer and hunting hare) is a constant, say, $c$. Furthermore, letting $A^n$ denote the profile in which every player hunts deer, we have that, for all $i \in I$, $u_i(A^n) > c$. Finally, all unilateral deviations from $D^n$ (that is, from the profile where everyone defects and hunts hare) are harmful.
\end{definition}

The last requirement that unilateral deviations from $D^n$ are harmful can be understood as follows: To hunt deer a player expends resources (such as his or her energy) and unless the hunt is successful these resources are wasted. In the setting of network evolution, the resources being wasted would typically correspond to an investment that does not pay off. An example of a two-player stag hunt is the following game:
\begin{align*}
\left(\begin{array}{c c}
(10, 10) & (-1, 0)\\
(0, -1) & (0, 0)\\
\end{array}\right).
\end{align*}
If both players cooperate in choosing the superior strategy each earns a payoff of $10$ whereas the uncooperative strategy yields a payoff of $0$ irrespective of what the other player does. If a player chooses to cooperate but the other player doesn't, the former earns a negative payoff.

\subsection{Definition of coordination mechanism}

Predicting the outcome of a stag hunt is a notoriously hard problem that can be formalized as an {\em equilibrium selection problem} provided we accept the Nash equilibrium as the starting point of the analysis, a rather common assumption in game theory with few exceptions. A fundamental assumption in most attempts to provide a rigorous foundation for game theory since its inception is that players attempt to {\em maximize utility}. In the stag hunt there is a unique outcome where utility is maximized uniformly across all players that corresponds to the selection of the superior equilibrium. However, experience suggests that the inferior equilibrium may also manifest in practice, an observation that is supported by empirical evidence in experimental settings. 

The manifestation of the inferior equilibrium agrees with Nash equilibrium theory, however, solving the equilibrium selection problem amounts to more. Interestingly, as discussed in a related Wikipedia entry,\footnote{\url{http://en.wikipedia.org/wiki/Risk_dominance}} in their seminal attempt to provide a solution to this latter problem, Harsanyi and Selten \cite{HS} identified selection of the superior equilibrium as corresponding to a ``rational decision'' but later Harsanyi retracted this position and identified selection of the inferior equilibrium as the rational choice. We note, however, that what constitutes rational choice in such environment is an unresolved problem that warrants further investigation. The mechanisms we present in this paper are designed to eschew ambiguities that arise owing to multiplicity of equilibria (or more generally, game-theoretic solutions). Let us, therefore, define these mechanisms in more precise terms.

We define a {\em coordination mechanism} broadly to be a {\em social institution} that induces the superior cooperative outcome in a stag hunt (called the {\em basis game}), and, therefore, averting a coordination failure, by means of merely adding strategies in the basis game, without affecting its payoff structure. That is, to qualify as a coordination mechanism, a social institution must not only induce cooperation but it must also be based on the voluntary participation of the players, who may choose to opt out if they so desire. Of course players opting in have to abide by the rules of the coordination mechanism---various situation-specific approaches can be taken to that effect. We note that whether an institution qualifies to be a coordination mechanism depends on the analytical method that is used to predict the outcome of the environment that is enhanced by the mechanism. In the sequel, we discuss a variety of methods that can be used for the analysis, drawing on them to analyze particular instances of coordination mechanisms, which we outline next.

\subsection{Examples of plausible coordination mechanisms}

In the rest of this section, we provide two examples of mechanisms that seem to be plausible candidate institutions meant to induce the cooperative outcome in a stag hunt. In subsequent sections, we prove that these mechanisms qualify as coordination mechanisms.\\

\noindent
The mechanisms' efficacy rests on the following, common to both mechanisms, assumptions: 
\begin{itemize}

\item[{\bf A1.}] It is possible to enforce {\em commitments}. 

\item[{\bf A2.}] Using the cooperative strategy requires an {\em investment,} and a third party can measure the cost of individual investment. 

\item[{\bf A3.}] A third party can verify ex post whether a particular player used the cooperative strategy.

\end{itemize}

In Internet architecture, an investment could correspond to the purchase of equipment and the implementation of technological standards, for example. Enforcing commitments is feasible across national borders through international commerce and contract law, while whether the second and third assumptions are satisfied depends on the nature of the emerging technology. Other mechanisms can also be employed toward the end of enforcing commitments however. For example, instead of relying on international contract law, a particularly compelling alternative to that end is to rely on nothing more than the institutional and organizational structure and the technological capabilities of the global Internet including the service providers, their business relationships, and the capabilities of their networks. For example, detecting infringement of commitments is possible through {\em peer review} and to punish infringement a coalition of service providers could {\em rate limit} those who infringe. The benefit of such design would be to decouple the Internet's evolution from international politics, and meeting the ensuing challenges is a fruitful direction for future work.

\subsubsection{An insurance mechanism}

The first mechanism is based on observing that adoption of the cooperative strategy in a stag hunt is a {\em risky undertaking} (as the investment involved in adopting such a strategy will not reap benefits unless others cooperate) and consists of instituting a carrier that {\em insures} these investments. There exist various possibilities on the carrier's governance structure as well as the specific insurance policies that can be offered to potential adopters. Perhaps the simplest of all insurance mechanisms is to offer policies that cover amounts slightly exceeding the corresponding potential adopters' deployment costs, and it is precisely this case that we analyze in the sequel. Observe that by offering such insurance policies, we do not modify the payoff structure of the basis game, and therefore the insurance mechanism seems quite a plausible candidate to qualify as a coordination mechanism. In our analysis in later sections, we show that this mechanism is indeed a coordination mechanism and that universal cooperation can, in principle, manifest without any of the potential adopters purchasing insurance, a rather surprising finding. We note that the idea of using insurance to induce the cooperative outcome in a stag hunt has been previously proposed by Autto \cite{Autto}, but our analysis is more thorough and, in part, relies on a different analytical framework.

\subsubsection{An election mechanism}

Our second mechanism is based on the intuitive observation that {\em communication helps coordination,} a fact supported by ample experimental evidence (e.g.,~\cite{coord-comm}). However, in a group as large as the group of autonomous systems in the Internet, pairwise communication is untenable. But this does not preclude the possibility that indirect (more economical) communication can be as effective. For example, an interesting idea is to give players the opportunity to vote for or against adoption of the emerging technology as a means of either stimulating confidence that adoption is viable or renouncing adoption ahead of the investments. Such an idea is amenable to empirical testing, but, lacking precise models of the influence of signals (such as the outcome of the vote) on human behavior, predicting this mechanism's effect on adoption is a hard if not intractable problem. Our solution draws, however, on this idea and on the aforementioned feasibility assumptions.

We only consider in this proposal a simple election mechanism, the simplest we can think of that can be analytically shown to avert a coordination failure. According to this mechanism players have the option to vote for adoption, and once voting is complete, the mechanism outputs `1' if all players voted and `0' otherwise; the mechanism does not provision for negative votes and it does not disclose the identity of the players who voted. Such a mechanism being in place, players can condition their adoption decision on their own vote and on the outcome of the election. However, to have a predictable effect on adoption, payoffs must depend on the election outcome, and, to that effect, we introduce the following rule: Voting becomes a {\em commitment} to adopt if all players vote in which case defectors must suffer a penalty not smaller than their investment cost. The aforementioned feasibility assumptions ensure that such a mechanism can be implemented.  

We should note that, from a practical perspective, in adoption environments consisting of thousands of players the assumption of requiring that all players vote in favor of adoption to enforce a commitment is perhaps not realistic; for example, even if we assume that all players are rational, human errors, software errors (for example, in the software system that implements the election mechanism), or combinations thereof are not unlikely to manifest. However, commitments can be reasonably enforced as long as a large enough subset of players vote favorably without affecting the mechanism's efficacy: If a large enough subset of players adopt, then evolutionary forces alone inherent in the adoption environment are likely to induce themselves manifestation of the superior equilibrium of universal adoption (throughout the entire player population).

Our election mechanism is not a mere theoretical curiosity; it is akin to a {\em funding mechanism} (called {\em crowdfunding}) various platforms such as Kickstarter use to fund creative projects with great success. In this mechanism, possible contributors interested in seeing the idea of a project proposal come to reality can {\em pledge} an amount to a particular project proposal and if the proposal meets a target amount, the amounts pledged are contributed to the project; such a mechanism is evidently akin to voting in favor of an emerging technology and committing to adopt if the election outcome is positive; as noted in Kickstarter's homepage: ``All-or-nothing funding might seem scary, but it's amazingly effective in creating momentum and rallying people around an idea.'' Although it is not immediately clear whether the incentive structure of the environment crowdfunding platforms face can be modeled after a stag hunt, it is certainly a problem of {\em collective action} (a class of problems that, in a certain broad definition, encompasses stag hunts). Coordination mechanisms in these related environments warrant further independent scrutiny and we believe the analytical ideas introduced in this paper can make an important contribution to that end.\\

Let us briefly discuss the approach we will take in analyzing the election mechanism. Observe to that end that, as mentioned earlier, without the election mechanism universal defection is a strict equilibrium. However, once this mechanism is instituted it is not harmful for any player to vote in favor of adoption, assuming, for example, players keep in mind the reasonable strategy of defecting if the outcome of the vote is negative. Observe further that if all players think in this way and decide to vote, then, by the design of the mechanism, all players commit to adopt the cooperative strategy (implying emergence of the cooperative outcome). This rather reasonable intuitive analysis does not admit a straightforward game-theoretic justification; in fact, even if the election mechanism is instituted, universal defection remains a pure Nash equilibrium, albeit weak (and, therefore, of less predictive value than its strict counterpart). In the sequel, one of our main goals is to place squarely the aforementioned informal intuitive observations as of the election mechanism's efficacy in the game-theoretic apparatus, and many of our results are obtained in this vein.

\section{Analyzing coordination mechanisms}
\label{analytical_methods}

To analyze coordination mechanisms, the approach we take is naturally {\em game-theoretic}. Our analysis is couched in terms of well-known solution concepts such as the Nash equilibrium and dominance solvability but we also depart from these established analytical approaches in the interest of developing a solution method that is motivated by the empirical intuition of researchers and practitioners in the networking community. We demonstrate that this latter analytical approach bears keen relevance to a game-theoretic solution concept formalized by theoretical computer scientists.

\subsection{Nash equilibrium analysis}

Central in game theory is the concept of {\em mixed strategies}. To the extent of my knowledge, the concept was introduced by von Neumann in proving the minimax theorem. Furthermore, Nash's celebrated result is that mixed strategy Nash equilibria always exist in strategic games. These results are certainly intuitively appealing, however, the empirical evidence that players facing strategic environments randomize as they select their corresponding strategies is sparse. 

There is a rich literature on mixed Nash equilibria; most game-theoretic solution concepts are couched in this framework. One of the fundamental problems Nash equilibrium theory faces is related is to the possible multiplicity of solutions; for example, coordination games are well known, to some extent by definition, to be amenable to this phenomenon. To the purpose of analyzing collective action problems (according to a certain broad definition that encompasses the stag hunt, the game of primary interest in this paper), Medina \cite{Medina} introduces the method of {\em stability set analysis} extending the equilibrium selection theory of Harsanyi and Selten \cite{HS}. Stability set analysis is based on the assumption that players introspect about the game they are facing assuming either they or their opponents are in the position to randomize their selection of strategy, what may be a less elementary to the architectural innovation problem motivating this paper assumption.

The literature on strategic games in pure strategies is sparse by comparison. But the engagement of computer scientists in game theory through their {\em algorithmic lens} is promising to put new life into the, what seems more natural, discrete nature of strategic games. Such a shift is, in fact, already noticeable in the literature on the intersection between economics and computation, the introduction of the concept of the {\em sink equilibrium} being a case-in-point \cite{Goemans}.

I will not take a general stance in favor of one or the other type of approach to studying strategic games in the general setting (both perspectives can yield useful insights in all likelihood), however, I believe that in the more narrow setting of interest in this paper, namely, the problem of evolution at the core of the Internet architecture, the concept of mixed strategies is less elementary; as a case-in-point mixed strategies do not play a significant role in the study of coordination games, the model upon which my study of the problem of architectural evolution is based.

Viewed as a solution concept, the pure Nash equilibrium is less attractive than its mixed counterpart as it does not satisfy the existence property; but, if they exist, pure Nash equilibria are generally deemed as plausible solutions. We should note right upfront, however, that Nash equilibrium theory alone may provide useful predictions depending on the coordination mechanism being employed; for example, if a coordination mechanism induces a game with a unique Nash equilibrium, Nash equilibrium theory would predict this outcome as the game's only possible solution.

\subsection{Dominance solvability analysis}

A basic rationality postulate in game theory is that a player will never play a (strictly) ``dominated'' strategy. There are two notions of dominance, namely, {\em strict} (also sometimes called {\em strong}) dominance and {\em weak} dominance. We introduce these definitions below.

\begin{definition}[Strict dominance]
Let $\Gamma = (I, (S_i)_{i \in I}, (u_i)_{i \in I})$ be a game in strategic form, and consider player $i \in I$. We say that pure strategy $s_i$ {\em strictly dominates} the pure strategy $s'_i$ if if for all $\sigma_{-i} \in S_{-i}$, we have that
\begin{align*}
u(s_i, \sigma_{-i}) > u(s'_i, \sigma_{-i}).
\end{align*}
\end{definition} 

\begin{definition}[Weak dominance]
Let $\Gamma = (I, (S_i)_{i \in I}, (u_i)_{i \in I})$ be a game in strategic form, and consider player $i \in I$. We say that pure strategy $s_i$ {\em weakly dominates} the pure strategy $s'_i$ if if for all $\sigma_{-i} \in S_{-i}$, we have that
\begin{align*}
u(s_i, \sigma_{-i}) \geq u(s'_i, \sigma_{-i})
\end{align*}
with the inequality being strict for at least one $\sigma_{-i}$.
\end{definition} 

Strict dominance is, in a sense, compatible with the Nash equilibrium as it is possible to show that a strictly dominated strategy will never appear in a Nash equilibrium. Therefore, eliminating strictly dominated strategies from a game is a means of simplifying a game without running the risk of eliminating Nash equilibria unlike eliminating weakly dominated strategies.

The idea of eliminating dominated strategies from a game gives rise to the solution concept of {\em iterated dominance,} which appears in the literature in two variations, one {\em strict} and one {\em weak} according to the elimination criterion being used. Iterated dominance proceeds in rounds where in each round dominated strategies are eliminated until no strategies that can be eliminated remain. If the outcome of this process is a unique strategy for each player, the respective game is called {\em dominance solvable}. If a game is dominance solvable by strict iterated dominance, the respective strategy profile can be shown to be the unique (pure) Nash equilibrium of the game. The outcome of iterated weak dominance depends, unlike its strict counterpart, on the order by which weakly dominated strategies are eliminated. Despite this limitation weak iterated dominance is generally acknowledged as a credible means of drawing conclusions on a game's possible outcome.

\subsection{Evolution by incremental deployability and maximal states}

The term ``incremental deployability'' originates in the networking literature. It does not admit a rigorous definition but is colloquially used in the following sense: An emerging technology is called incrementally deployable against an incumbent technology if early adopters of the emerging technology benefit \cite{Clean-Slate}. In the rest of this section, we devise a solution concept of strategic games using this notion. To that end, we first introduce some preliminary definitions.

Given a game $\Gamma$ in strategic form, we define its {\em deployment graph} as the directed graph $G$ whose vertices correspond to $\Gamma$'s strategy profiles and whose arcs correspond to a notion of ``feasible'' unilateral deviations. Therefore, the set of vertices of $G$ is the direct product $\times_i S_i$. The notion of feasibility may admit a variety of plausible definitions. In this paper, we explore two such definitions, namely, we either call a unilateral deviation feasible if it is (strictly) profitable for the corresponding player that deviates (switches strategies) or if it is not harmful. For example, the deployment graph of the aforementioned stag hunt is shown in Figure \ref{dgsh}.

\begin{figure}[tb]
\centering
\includegraphics[width=7cm]{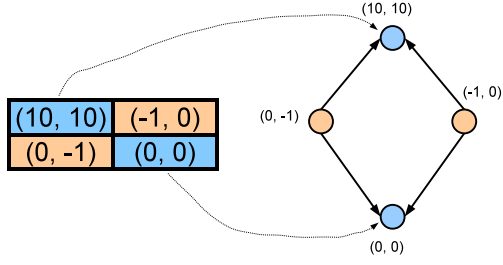}
\caption{\label{dgsh}
Example of deployment graph of a (two player) stag hunt.}
\end{figure}

We are now ready to provide a rigorous definition of incremental deployability: We say that a strategy profile (state) $\sigma$ of $\Gamma$ is {\em incrementally deployable} against strategy profile (state) $\sigma'$ if there exists a directed path from $\sigma'$ to $\sigma$ in $\Gamma$'s deployment graph $G$. To understand this definition in the particular setting of architectural innovation we consider, we may think of $\Gamma$ as capturing the incentive structure of the adoption environment where possibly multiple technologies (corresponding in this particular setting to strategies) compete for adoption. Then a state (a mix of technologies) is incrementally deployable against another state, if there exists a sequence of beneficial ``moves'' players of the adoption environment can take unilaterally, where a move corresponds to a player abandoning one technology in favor of another, to effect such change in the technology mix.

The following question seems to be in order given the aforementioned definitions: In an environment where evolution is driven by the process of incremental steps we have defined, which states, if any, would seem natural to emerge in the ``steady state'' as evolutionary forces have settled? Although we do not specify a particular mechanism by which evolution should take place, it is natural to assume that if a state $s$ is incrementally deployable against another state $s'$ but $s'$ is not incrementally deployable against $s$, $s'$ cannot manifest other than as a transient state, an assumption in agreement with the colloquial notion of incremental deployability as that is used in the networking community. The following definitions rigorously capture this intuition.

To formalize our solution concept in an environment where evolution is driven by incremental deployability we rely on {\em order-theoretic} parlance. Order theory has its origins in {\em decision theory} (initially formalized by von Neumann and Morgernstern \cite{GameTheory}) and it is typically used as an analytical tool in the interest of studying the decision making behavior of individual players in strategic environments (see also \cite{MWG}). In this section, we use order theory to formalize collective behavior.

\subsubsection{Preliminaries on binary relations}

Let $S$ be a nonempty set. A subset $R$ of $S \times S$ is called a {\em binary relation} on S. If $(s,s') \in R$, we write $sRs'$, and if $(s,s') \not\in R$, we write $\neg sRs'$. $R$ is called {\em reflexive} if $s R s$ for every $s \in S$. $R$ is called {\em complete} if for each $s,s' \in S$ either $sRs'$ or $s'Rs$. $R$ is {\em transitive} if for any $s, s', s'' \in S$ we have that $sRs'$ and $s'Rs''$ imply $sRs''$. $R$ is {\em symmetric} if, for any $s, s' \in S$, we have that $sRs' \Rightarrow s'Rs$ and {\em asymmetric} if, for any $s, s' \in S$, we have that $sRs' \Rightarrow \neg s'Rs$. Let $sPs' \Leftrightarrow sRs' \wedge \neg s'Rs$ and $sIs' \Leftrightarrow sRs' \wedge s'Rs$. Then $P$ and $I$ are also binary relations on S where $P \subseteq R$ and $I \subseteq R$. $P$ is called the {\em asymmetric} (or {\em strict}) part of $R$ and $I$ is called its {\em symmetric} part. 

\begin{definition}
A binary relation $\sim$ on a nonempty set $S$ is called an {\em equivalence relation} if it is reflexive, symmetric, and transitive. For any $s \in S$, the {\em equivalence class} of $s$ relative to $\sim$ is defined as the set
\begin{align*}
[s]_{\sim} = \{ \sigma \in S | s \sim \sigma \}.
\end{align*}
The collection of all equivalence classes relative to $\sim$, denoted as $S/_\sim$, is called the {\em quotient set} of $S$ relative to $\sim$, that is, $S/_{\sim} = \{ [s]_{\sim} | s \in S \}$.
\end{definition}

A well known result on equivalence relations is that for any equivalence relation $\sim$ on a nonempty set $S$, the quotient set $S/_{\sim}$ is a partition of $S$.

\begin{definition}
A binary relation $\succeq$ on a nonempty set $S$ is called a {\em preorder} on $S$ if it is transitive and reflexive.
\end{definition}

\begin{definition}
Let $S$ be an nonempty set, $\succeq$ a binary relation on $S$, and let $\succ$ be its asymmetric part. If $s \succ s'$ we say that $s$ {\em dominates} $s'$. An element $s^*$ of $S$ is called $\succeq$-maximal if it is {\em undominated}, that is, if for all $s \in S$, $\neg (s \succ s^*)$.
\end{definition}

An elegant proof of the following result can be found in the online manuscript Elements of Order Theory by Ok.\footnote{\url{https://files.nyu.edu/eo1/public/books.html}}

\begin{proposition}
\label{weoprtiueorituoirtu}
Let $S$ be a nonempty finite set, and let $\succeq$ be a preorder on $S$. Then there exists an element $s^*$ of $S$ such that $s^*$ is $\succeq$-maximal.
\end{proposition}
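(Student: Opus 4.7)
The plan is to reduce the proposition to a statement about the strict relation $\succ$ and then exploit finiteness. First I would verify that $\succ$ inherits good structural properties from the fact that $\succeq$ is a preorder: namely, $\succ$ is asymmetric by its very definition, and it is transitive. Indeed, if $s \succ s'$ and $s' \succ s''$, then $s \succeq s'$ and $s' \succeq s''$ give $s \succeq s''$ by transitivity of $\succeq$; and if we had $s'' \succeq s$, combining with $s \succeq s'$ would give $s'' \succeq s'$, contradicting $s' \succ s''$. So $\succ$ is asymmetric and transitive, hence irreflexive.

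Next I would argue by contradiction. Suppose that no element of $S$ is $\succeq$-maximal. Then for every $s \in S$ there exists some $s' \in S$ with $s' \succ s$. Pick any $s_0 \in S$ and recursively choose $s_{i+1} \in S$ with $s_{i+1} \succ s_i$ for each $i \geq 0$; this is possible precisely because no element is maximal. By induction on $j - i$ using transitivity of $\succ$, we get $s_j \succ s_i$ whenever $j > i$. Asymmetry of $\succ$ then forces $s_j \ne s_i$ for all $j > i$, so the sequence $(s_i)_{i \geq 0}$ consists of pairwise distinct elements of $S$. This contradicts the hypothesis that $S$ is finite, completing the proof.

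An equivalent route, which I would mention as a sanity check, is induction on $|S|$: the base case $|S| = 1$ is immediate by reflexivity, and for the inductive step one picks any $s \in S$, applies the hypothesis to $S \setminus \{s\}$ to obtain a $\succeq$-maximal $t$ there, and then uses transitivity of $\succ$ to lift $t$ (or $s$ itself, depending on whether $s \succ t$) to a maximal element of $S$. The main, and really only, subtlety in either approach is the bookkeeping in establishing that $\succ$ is a genuine strict partial order from the preorder axioms on $\succeq$; once that is in place, the finiteness argument is essentially forced.
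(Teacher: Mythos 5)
Your argument is correct: you verify that the asymmetric part $\succ$ of a preorder is irreflexive, asymmetric, and transitive, and then the assumption that no element is $\succeq$-maximal yields an infinite $\succ$-increasing sequence of pairwise distinct elements, contradicting finiteness of $S$; the inductive variant you sketch also goes through. Note, however, that the paper does not supply its own proof of this proposition --- it defers to an external reference (Ok's \emph{Elements of Order Theory}) --- so there is no in-paper argument to compare against; your write-up is a standard, self-contained proof of exactly the kind the citation points to, and the one genuine subtlety you flag (that $\succ$ really is a strict partial order when $\succeq$ is only a preorder, not antisymmetric) is handled correctly.
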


\subsubsection{Maximal states as a game-theoretic solution concept}

Let us first define deployment graphs in more rigorous terms. Recall that there are two distinct notions of such graphs, one where arcs correspond to (strictly) profitable unilateral deviations and one where are arcs correspond to unilateral deviations that are not harmful, which we will refer to as {\em strict deployment graphs} and {\em ordinal deployment graphs} respectively.

\begin{definition}[Strict deployment graph]
Let $S$ be the profile space of a normal form game $\Gamma$. We define the {\em improvement graph} $G(S, A)$ of $\Gamma$ as the directed graph that represents the players' (strictly) profitable unilateral deviations, that is, 
\begin{align*}
(s, s') \in A \Leftrightarrow \exists i : (s'_i, s_{-i}) = s' \mbox{ and } u_i(s') - u_i(s) > 0.
\end{align*}
\end{definition}

\begin{definition}[Ordinal deployment graph]
Let $\Gamma$ be a normal form game, and let $S$ be its profile space. We define the {\em advancement graph} $G(S, A)$ of $\Gamma$ as the directed graph that represents the players' weakly profitable unilateral deviations, that is, 
\begin{align*}
(s, s') \in A \Leftrightarrow \exists i : (s'_i, s_{-i}) = s' \mbox{ and } u_i(s') - u_i(s) \geq 0.
\end{align*}
If the above inequality is strict, we call the arc {\em positive}, and otherwise we call the arc {\em neutral}. Note that neutral arcs always come in {\em pairs} (that is, if $(s, s')$ is neutral, then $(s',s)$ is a neutral arc). 
\end{definition}

We are now ready to define our solution concept in two distinct variants according to the notion of a deployment graph that is appropriate in the modeling environment that is of interest.

\begin{definition}[Weakly maximal states]
Let $\Gamma$ be a normal form game, let $S$ be its profile space, let $G$ be its strict deployment graph, and order the vertices of this graph as follows:
\begin{align*}
s \succeq s' \Leftrightarrow \mbox{There exists a path in $G$ from $s'$ to $s$}.
\end{align*}
We call $\succeq$, the strict preference relation on $S$. We call the maximal elements of $(S, \succeq)$ the {\em weakly maximal states} of $\Gamma$.
\end{definition}

\begin{definition}[Strongly maximal states]
Let $\Gamma$ be a normal form game, let $S$ be its profile space, let $G$ be its ordinal deployment graph, and order the vertices of this graph as follows:
\begin{align*}
s \succeq s' \Leftrightarrow \mbox{There exists a path in $G$ from $s'$ to $s$}.
\end{align*}
We call $\succeq$, the drifting preference relation on $S$. We call the maximal elements of $(S, \succeq)$ the {\em strongly maximal states} of $\Gamma$.
\end{definition}

Both the strict and drifting preference relations are transitive, and since reflexivity is trivial to satisfy by a matter of definition, both types of preference relations are preorders. Proposition \ref{weoprtiueorituoirtu} then implies the following theorem.

\begin{theorem}
\label{existence}
Every strategic-form game is equipped with both weakly and strongly maximal states.
\end{theorem}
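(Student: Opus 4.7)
The plan is to recognize that Theorem \ref{existence} is essentially a direct corollary of Proposition \ref{weoprtiueorituoirtu}, so the entire proof reduces to verifying that the hypotheses of that proposition are satisfied for both the strict and drifting preference relations. Concretely, I need to check three things: the profile space $S$ is nonempty and finite, and each of the two relations $\succeq$ is a preorder (reflexive and transitive).

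First, I would observe that $S = \times_i S_i$ is nonempty and finite because the game is in (finite) strategic form, so each $S_i$ is finite and nonempty (implicit in the definition of a strategic-form game given in the preliminaries). This takes care of the underlying set.

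Next, I would verify the preorder properties for both relations simultaneously, since the argument is identical and depends only on the path-based definition, not on which kind of arc (strict or ordinal) populates the deployment graph. For reflexivity, I would take the convention that the empty (length-zero) path from $s$ to itself always exists in any directed graph, so $s \succeq s$ for every $s \in S$. For transitivity, suppose $s \succeq s'$ and $s' \succeq s''$. Then there is a directed path in the relevant deployment graph from $s'$ to $s$ and another from $s''$ to $s'$; concatenating the second path with the first yields a directed path from $s''$ to $s$, giving $s \succeq s''$. Hence both $\succeq$ relations are preorders on $S$, as already noted in the remark preceding the theorem.

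Finally, I would invoke Proposition \ref{weoprtiueorituoirtu} with $S$ and each preorder $\succeq$ in turn to obtain the existence of a $\succeq$-maximal element. By the definitions of weakly and strongly maximal states, these maximal elements are precisely the weakly maximal states (under the strict preference relation) and strongly maximal states (under the drifting preference relation) of $\Gamma$. This completes the proof. There is no real obstacle here; the only subtlety worth flagging explicitly is the convention that the zero-length path secures reflexivity, since otherwise a profile from which no feasible unilateral deviation exists might fail to satisfy $s \succeq s$.
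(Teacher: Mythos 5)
Your proof is correct and follows essentially the same route as the paper: the paper likewise observes that both the strict and drifting preference relations are preorders (transitivity from path concatenation, reflexivity by convention) and then invokes Proposition \ref{weoprtiueorituoirtu} on the finite profile space. Your explicit flagging of the zero-length-path convention for reflexivity is a welcome clarification of a point the paper dismisses as trivial, but it does not constitute a different argument.
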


Therefore, as game-theoretic solution concepts in pure strategies, weakly and strongly maximal states satisfy existence, a property on par with the Nash equilibrium in mixed strategies.

We note that an alternative proof of existence of (weak and strong) maximal states can be couched in graph-theoretic terms using the well-known property that a ``directed acyclic graph'' (using the standard notion of acyclicity) is guaranteed to have a {\em sink} (i.e., a vertex without outbound arcs). To that end, we consider the quotient sets of the strict and drifting preference relations of a normal form game. Then, if we contract vertices belonging to the same member of the quotient set, we obtain a directed acyclic graph that, as mentioned above, is always equipped with a sink, which corresponds to a class of maximal elements (either in the strict or drifting preference relation).

\subsubsection{Maximal states and sink equilibria}

Maximal states are a game-theoretic solution concept defined in a ``static'' manner. In the sequel, we provide a dynamic interpretation of maximal states in a dynamic model of evolution using the well-studied concept of {\em Markov chains}. Let us first introduce some relevant background definitions.

A Markov chain is a sequence of discrete random variables taking values in some countable set, say $S$, called the {\em state space}. If the cardinality of $S$ is finite, we are referring to a {\em finite state} Markov chain. A Markov chain is called {\em homogeneous} is its behavior can be described in terms of {\em transition probabilities} between states: The probability of transitioning from one state, say $s \in S$, to another, say $s' \in S$, depends neither on history nor the time step but only on $s$ and $s'$. 

The ``solution concept'' in homogeneous finite state Markov chains is defined based on a notion of ``recurrency.'' In such a Markov chain, states are classified as being either {\em transient} or {\em recurrent}. A state is  recurrent if, starting from that state, the probability of returning to it is one, and it is transient otherwise. Finite state homogeneous Markov chains admit a graph-theoretic representation (cf. \cite{Gallager2}) in which states correspond to vertices and feasible transitions, i.e., transitions between states that may occur with positive probability, correspond to arcs. (Such a graph is typically referred to in the literature as the {\em transition graph}.) It is possible to show that recurrent states are independent on the precise values of transition probabilities (for as long as arcs correspond to transitions that may occur with strictly positive probabilities). That recurrent states are ensured to exist and that, starting from any state, a chain will eventually transition to a recurrent class of states (and remain there, thereafter) are well-known results in this theory.

With this background in mind, it seems natural to define a game-theoretic solution concept based on the notion of homogeneous finite state Markov chains, which is known in the literature on algorithmic game theory as the (aforementioned) {\em sink equilibrium} \cite{Goemans}. Sink equilibria are classes of recurrent states in a Markov chain where states correspond to the strategy profiles of a game in strategic form and where transition probabilities are defined based on better and best responses.  

It can be shown using standard characterizations of recurrent states (as can be found, for example, in \cite{Gallager2}) that the notions of maximal states and sink equilibria coincide (as long as the deployment and transition graphs coincide). This implies that maximality, as we earlier defined it using the static concept of incremental deployability, admits a rather appealing evolutionary interpretation in which players take turns and switch strategies according to the transition (deployment) graph. The outcome of this process can be either a pure Nash equilibrium, a collection (class) of equilibrium states, or more complicated cyclical behavioral patterns, as discussed in the sequel.

In closing, we note that the solution concept of the sink equilibrium is used exclusively, to the extent of our knowledge, in the literature as being synonymous with the notion of weak maximality. In the next section, we attempt a more thorough comparison between the notions of weak and strong maximality that we draw on later in the analysis of coordination mechanisms.

\section{Weak vs. strong maximality}
\label{on_maximality}

Weakly and strongly maximal states are rather different solution concepts. It is easy to see that weakly maximal equilibria coincide with pure Nash equilibria, however, although strongly maximal equilibria are necessarily pure Nash equilibria, there exist pure Nash equilibria that are not strongly maximal states. (We will come across this rather fine point again in the analysis of the election mechanism.) In this section, we argue that strongly maximal equilibria is an as plausible equilibrium solution concept in strategic games as their well-established weakly maximal counterparts (pure Nash equilibria). Our argument is couched in the setting of {\em ordinal potential games}.

\subsection{Potential games}

Potential games were introduced by Monderer and Shapley \cite{Potential} as strategic form games whose incentive structure is captured by a scalar potential function $P: S \rightarrow \mathbb{R}$ where $S$ is the space of strategy profiles of a game in strategic form. Although there are many variants, two important classes of potential games are the {\em ordinal potential games} and the {\em generalized ordinal potential games}. Let us, therefore, introduce these notions more formally: A function $P: S \rightarrow \mathbb{R}$ is an {\em ordinal potential} for a (finite) strategic form game $\Gamma$, if
\begin{align*}
u_i(s, \sigma_{-i}) - u_i(s', \sigma_{-i}) > 0 \Leftrightarrow P(s, \sigma_{-i}) - P(s', \sigma_{-i}) > 0, \forall i \in I, \forall s, s' \in S_i, \forall \sigma_{-i} \in S_{-i}.
\end{align*}
$\Gamma$ is called an {\em ordinal potential game} if it admits an ordinal potential. A function $P: S \rightarrow \mathbb{R}$ is a {\em generalized ordinal potential} for a strategic form game $\Gamma$, if
\begin{align*}
u_i(s, \sigma_{-i}) - u_i(s', \sigma_{-i}) > 0 \Rightarrow P(s, \sigma_{-i}) - P(s', \sigma_{-i}) > 0, \forall i \in I, \forall s, s' \in S_i, \forall \sigma_{-i} \in S_{-i}.
\end{align*}
$\Gamma$ is called an {\em generalized ordinal potential game} if it admits a generalized ordinal potential. These classes of potential games are distinct from each other: Ordinal potential games are necessarily generalized ordinal potential games, as is apparent from the definition, but the converse is not necessarily true. Monderer and Shapley show that the class of generalized ordinal potential games coincides with the class of (finite) games in strategic form having the {\em finite improvement property,} which is typically abbreviated as the FIP. A game in strategic form has the FIP if ``improvement paths,'' that is, sequences of unilateral better responses, are necessarily finite. They further show that games having the FIP are necessarily equipped with a pure Nash equilibrium.

\subsection{Strongly maximal equilibria}

In the sequel, we introduce an equilibrium solution concept in strategic form games in pure strategies, which we refer to as a {\em strongly maximal equilibrium,} that, in general, differs from the pure Nash equilibrium: Strongly maximal equilibria are necessarily pure Nash equilibria, but there exist pure Nash equilibria that are not strongly maximal. Noting that an example illustrating this point will follow, let us introduce the notion of a strongly maximal equilibrium more formally. 

To that end, let us consider the ordinal deployment graph (as defined earlier). In such a graph, we say that two states (strategy profiles), say $s$ and $s'$, {\em communicate} if deployment paths exist in both directions, that is, from $s$ to $s'$ as well as from $s'$ to $s$. We say that a set of states is a {\em strongly maximal equilibrium class} if all states in the set are pure Nash equilibria that pairwise communicate. We call the states in a strongly maximal equilibrium class {\em strongly maximal equilibria}. In the sequel, we prove that maximal states in ordinal potential games are necessarily strongly maximal equilibria. Since maximal states are guaranteed to exist in any game in strategic form, strongly maximal equilibria are guaranteed to exist in ordinal potential games. We also give an example of a generalized ordinal potential game that is not equipped with a strongly maximal equilibrium.

\begin{definition}
Let $\Gamma$ be a game in strategic form, let $S$ be its space of strategy profiles (states), and let $G$ be its ordinal deployment graph. Let us further consider the quotient set $S/_\sim$. We say that $G$ is {\em ordinally acyclic} if all arcs within each member of $S/_\sim$ are neutral.
\end{definition}

In the following theorem, we call a cycle {\em positive} if it contains at least one positive arc.

\begin{theorem}
\label{opg}
A game $\Gamma$ in strategic form is an ordinal potential game if and only if $\Gamma$'s ordinal deployment graph $G$ is ordinally acyclic.
\end{theorem}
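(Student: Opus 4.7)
The approach is to prove the two directions by exploiting the fact that arcs in the ordinal deployment graph come in only two flavours: a \emph{positive} arc $(s,s')$ records a strict improvement $u_i(s') > u_i(s)$ for some deviating player $i$, while a \emph{neutral} arc forces $u_i(s') = u_i(s)$ and therefore always appears together with its reverse $(s',s)$. Thus in both directions I will translate the condition ``$\succeq$-equivalence class'' into the condition ``all mutual utility-preserving neighbours.'' The key bookkeeping fact I will establish up front is that a positive arc can never join two profiles belonging to the same $\sim$-class: if it did, a return path would exist, producing a cycle inside the class that contains a positive arc, violating ordinal acyclicity. Consequently, when the graph is ordinally acyclic, the quotient graph $G/\!\sim$ is itself a directed acyclic graph whose arcs are exclusively positive and inter-class.

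For the easy direction ($\Rightarrow$), I would assume $P\colon S\to\mathbb{R}$ is an ordinal potential for $\Gamma$. The biconditional in the definition of $P$ immediately yields (i) $P(s')>P(s)$ along any positive arc and (ii) $P(s')=P(s)$ along any neutral arc, since equal utilities imply neither $u_i(s)-u_i(s')>0$ nor $u_i(s')-u_i(s)>0$, and hence neither $P(s)>P(s')$ nor $P(s')>P(s)$. Now suppose, for contradiction, that some $\sim$-class contains a positive arc. By the definition of $\sim$ there is a closed walk through that arc; along this walk $P$ is weakly increasing along every step and strictly increases on the positive arc, so $P$ cannot return to its starting value, contradiction. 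Hence every arc within any $\sim$-class is neutral, i.e.\ $G$ is ordinally acyclic.

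For the harder direction ($\Leftarrow$), I would explicitly construct an ordinal potential. Using the preliminary observation above, I contract each $\sim$-class in $G$ to a single vertex; the resulting graph $\tilde G$ is a DAG whose arcs are all positive. Pick any topological order of $\tilde G$ and let $\varphi$ assign strictly increasing real values to the classes along that order; define $P(s) = \varphi([s]_{\sim})$. Then by construction $P$ is constant on each class and strictly increases along every positive inter-class arc. I would then verify the ordinal-potential biconditional on every unilateral deviation $(s,s')$: if $u_i(s')>u_i(s)$ there is a positive arc from $s$ to $s'$, which must cross classes (by the observation), so $P(s') > P(s)$; if $u_i(s')=u_i(s)$ the arc is neutral and both endpoints lie in the same class, so $P(s')=P(s)$; if $u_i(s')<u_i(s)$, the reverse argument gives $P(s')<P(s)$. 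This covers all three cases and establishes the biconditional.

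The main obstacle, and the step I would take most care with, is the structural claim that positive arcs respect the $\sim$-partition and that $\tilde G$ is genuinely acyclic. Both facts are exactly what ordinal acyclicity is tailored to supply, but they must be articulated cleanly because the rest of the construction (topological sort, well-definedness of $P$, verification of the biconditional) depends entirely on them. Everything else is essentially bookkeeping.
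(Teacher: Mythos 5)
Your proposal is correct and follows essentially the same route as the paper: the forward direction derives a contradiction from the behaviour of the potential along a positive closed walk inside a $\sim$-class, and the converse contracts each class, topologically orders the resulting DAG, and assigns strictly increasing potential values. Your write-up is somewhat more explicit than the paper's in isolating the fact that neutral arcs force $P(s)=P(s')$ and in checking the biconditional case by case, but the underlying argument is the same.
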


\begin{proof}
Let us assume first that $\Gamma$ is an ordinal potential game, let $S$ be $\Gamma$'s space of strategy profiles (states), and let us further assume, for the sake of contradiction, that $\Gamma$'s ordinal deployment graph $G$ has a positive cycle, say consisting of vertices $s_1, \ldots, s_k$, which by the definition of the quotient set $S/_\sim$ these vertices belong to the same member of $S/_\sim$. Then, by the definition of of an ordinal potential games, it must hold that $P(s_k) \geq \cdots \geq P(s_1) \geq P(s_k)$, where $P(\cdot)$ is $\Gamma$'s ordinal potential function, and at least one of the previous inequalities is strict. But this is an impossibility since potential functions are single-valued. Therefore, all arcs within a member of the quotient set $S/_\sim$ are neutral, and, thus, ordinal potential games are necessarily ordinally acyclic.

Conversely, suppose $\Gamma$ is ordinally acyclic (that is, all arcs in each member of the quotient set $S/_\sim$ are neutral). We may then construct an ordinal potential function $P : S \rightarrow \mathbb{R}$ as follows: Consider the graph that results once we contract (using the standard graph-theoretic notion of a contraction in directed graphs) all neutral arcs in each member of $S/_\sim$. Call this the strict ordinal deployment graph, which is easy to prove that it is acyclic. Consider further a topological ordering of the strict ordinal deployment graph, say $(v_1, \ldots, v_K)$, and assign potential values to these vertices such that $P(v_K) > \cdots > P(v_1)$. Noting that each vertex $v_i, i =1, \ldots, K$ corresponds to a set of vertices of the ordinal deployment graph $G$, and assigning to each such vertex the potential value that has been assigned to the corresponding vertex of the strict ordinal deployment graph, it can be easily verified the potential function we have defined is an ordinal potential function.
\end{proof}

The previous theorem implies that all maximal states of an ordinal potential game are strongly maximal equilibria. Theorems \ref{existence} and \ref{opg} imply the following corollary.

\begin{corollary}
\label{existence_strong}
Every ordinal potential game is equipped with a strongly maximal equilibrium.
\end{corollary}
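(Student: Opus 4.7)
The plan is to chain together Theorems \ref{existence} and \ref{opg} via the definition of a strongly maximal equilibrium class. Given an ordinal potential game $\Gamma$, Theorem \ref{existence} supplies at least one strongly maximal state $s^*$, namely a $\succeq$-maximal element under the drifting preference relation on the ordinal deployment graph $G$. The task then reduces to showing that $s^*$ actually qualifies as a strongly maximal equilibrium in the sense of the earlier definition, i.e., that $s^*$ sits inside an equivalence class of pure Nash equilibria that pairwise communicate.

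First I would consider the equivalence class $[s^*]_{\sim}$ consisting of all states that communicate with $s^*$ in $G$. By construction, any two states in $[s^*]_\sim$ communicate, so the pairwise-communication requirement is immediate. What remains is to verify that every member of $[s^*]_\sim$ is a pure Nash equilibrium. The key observation is that maximality propagates within an equivalence class: if $s \in [s^*]_\sim$, then $s \succeq s^* \succeq s$, so any $s'$ with $s' \succ s$ would also satisfy $s' \succ s^*$, contradicting the $\succeq$-maximality of $s^*$. Hence every state in $[s^*]_\sim$ is itself $\succeq$-maximal.

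Next I would use Theorem \ref{opg} to convert this maximality into the Nash property. Suppose, toward contradiction, that some $s \in [s^*]_\sim$ admits a strictly profitable unilateral deviation to $s'$. Then $(s,s')$ is a positive arc in $G$, so $s' \succeq s$. By the maximality of $s$, we must also have $s \succeq s'$, which means $s'$ communicates with $s$ and hence belongs to $[s^*]_\sim = [s]_\sim$. But then $(s,s')$ is a positive arc lying inside a member of the quotient set $S/_\sim$, contradicting the ordinal acyclicity of $G$ guaranteed by Theorem \ref{opg}. Therefore no member of $[s^*]_\sim$ has a profitable unilateral deviation, and every such state is a pure Nash equilibrium. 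It follows that $[s^*]_\sim$ is a strongly maximal equilibrium class and $s^*$ is a strongly maximal equilibrium, completing the argument.

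The main obstacle, and the step where care is needed, is the translation between the graph-theoretic notion of $\succeq$-maximality and the strategic notion of being a pure Nash equilibrium; the bridge is precisely the observation that in an ordinal potential game no positive arc can survive inside a communicating class, which is exactly what Theorem \ref{opg} provides. Everything else (existence of a maximal element, closure of maximality under communication, and the pairwise-communication clause in the definition of a strongly maximal equilibrium class) is essentially bookkeeping.
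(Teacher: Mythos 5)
Your proposal is correct and follows exactly the route the paper intends: the paper's own ``proof'' is the single remark that Theorems \ref{existence} and \ref{opg} imply the corollary (preceded by the observation that maximal states of an ordinal potential game are strongly maximal equilibria), and your argument is precisely the detailed expansion of that chain — existence of a maximal state, propagation of maximality through its communicating class, and ordinal acyclicity ruling out any positive arc from a state in that class.
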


Let us finally give an example of a generalized ordinal potential game that is not equipped with a strongly maximal equilibrium. Consider, to that end, the following game:
\begin{align*}
\left(\begin{array}{c c}
(1, 0) & (2, 0)\\
(2, 0) & (0, 1)\\
\end{array}\right).
\end{align*}
Monderer and Shapley \cite{Potential} show that the previous game is not an ordinal potential game but it is a generalized ordinal potential game. Observe that the strategy profile $(2, 0)$ is the game's unique pure Nash equilibrium, however, it is easy to see that it is not strongly maximal.

\subsection{Plausibility of the strongly maximal equilibrium}

The Nash equilibrium in mixed strategies admits a characterization rendering it quite appealing in strategic environments where mixed strategies are a plausible modeling artifact. Norde et al. \cite{Norde} show that the mixed Nash equilibrium is characterized by three properties referred to as {\em existence}, {\em one-person rationality,} and {\em consistency}. It will become evident from the definition of one-person rationality and consistency that these are intuitive properties that {\em any} equilibrium concept should satisfy. (To some extent, the property of existence of an equilibrium, although desirable and pleasing, does not, in my opinion, affect the predictive value of a solution concept as argued by the abundance of non-equilibrium behavior in nature, biology, and society.) However, although the characterization of Norde et al. implies that any Nash equilibrium refinement must violate one of these properties within the entire class of normal form games, within narrower classes of games the set of mixed Nash equilibria is not necessarily {\em minimal} with respect to these properties \cite{Peleg}. In particular, within the class of ordinal potential games, the mixed Nash equilibrium is known not to be minimal \cite{Peleg}. In the sequel, we show that in this class of games strongly maximal equilibria satisfy one-person rationality and consistency (as well as existence as implied by Corollary \ref{existence_strong}).\\

Let $\mathcal{G}$ be a class of normal form games. A {\em solution concept} (or {\em solution rule}) in $\mathcal{G}$, $\phi : \mathcal{G} \rightarrow 2^{S}$, assigns to each normal form game $\Gamma \in \mathcal{G}$ a possibly empty subset of the profiles $S$ of $\Gamma$. A solution rule $\phi$ is said to have the {\em existence} property within the class $\mathcal{G}$ if, for all games $\Gamma \in \mathcal{G}$, there exists at least one profile selected by the rule, that is, $\phi(\Gamma) \neq \emptyset$. Let $\mathcal{G}_1$ be the class of one-person games within $\mathcal{G}$. A solution rule $\phi$ is said to have the {\em one-person rationality} property within the (broader) class $\mathcal{G}$ if, for all one-person games in $\mathcal{G}_1$, $\phi$ selects profiles whose payoff is maximum.

Let $\Gamma = (I, (S_i)_{i \in I}, (u_i)_{i \in I})$ be a normal form game within a class $\mathcal{G}$ of normal form games, let $s \in S$ where $S$ is the profile space of $\Gamma$, and let $J$ be a {\em proper subcoalition} of the player set $I$ of $\Gamma$, in that $J$ is a nonempty proper subset of the player set. The {\em reduced game} $\Gamma^{J, s} = (J, (S_j)_{j \in J}, (w_j)_{j \in J})$ of $\Gamma$ with respect to $J$ and $s$ is a normal form game whose player set is $J$. Each player $j \in J$ has the same strategy space $S_j$ as in $\Gamma$. Let $\mathcal{S} = \times_{j \in J} S_j$ be the profile space of the reduced game. The reduced game's payoffs are defined such that, for each profile $\sigma \in \mathcal{S}$, $w_j(\sigma) = u_j(\sigma, s_{-J})$. A solution rule is said to be {\em consistent} within a class $\mathcal{G}$ of normal form games, if, for all games $\Gamma \in \mathcal{G}$, all proper subcoalitions $J \subset I$ and all solutions $s^* \in \phi(\Gamma)$, we have that 
\begin{description}

\item[(i)] If players outside the subcoalition $J$ select strategies corresponding to $s^*$, then the resulting reduced game $\Gamma^{J, s^*}$ falls within the class $\mathcal{G}$.

\item[(ii)] The profile $s^*_J$ of the reduced game where players of the subcoalition $J$ also select strategies corresponding to $s^*$ is a solution of the reduced game  $\Gamma^{J, s^*}$.

\end{description}

In the proof of the following theorem, we need a lemma in whose statement and proof some definitions are in order. Given a (simple) path in the ordinal deployment graph, let us call the first vertex in the path that path's {\em tail} and the last vertex its {\em head}. Let us call a path in the ordinal deployment graph an {\em advancement path} if at least one of its respective arcs is a positive arc.

\begin{lemma}
\label{help_lemma}
Let $\Gamma$ be an ordinal potential game, let $S$ be its profile space, and let $\mathcal{W}$ be the set of all paths in $\Gamma$'s ordinal deployment graph. Then $s^* \in S$ is maximal if and only if there is no advancement path in $\mathcal{W}$ whose tail is $s^*$.
\end{lemma}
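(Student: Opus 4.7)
\emph{The plan} is to prove both directions by contradiction, leaning on two facts already established: Theorem \ref{opg}, which says the ordinal deployment graph of an ordinal potential game is ordinally acyclic (equivalently, every arc inside an equivalence class of $S/_\sim$ is neutral, so no positive cycle exists), and the structural fact, built into the definition of the ordinal deployment graph, that neutral arcs come in pairs.

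\emph{For the forward direction} I would assume $s^*$ is maximal and suppose, for contradiction, that some advancement path $\pi$ has tail $s^*$ and head $s'$, with at least one positive arc. The existence of $\pi$ gives $s' \succeq s^*$; since $s^*$ is maximal, $s' \succ s^*$ is impossible, so we must also have $s^* \succeq s'$, i.e.\ a return path $\pi'$ from $s'$ back to $s^*$ exists. Concatenating $\pi$ with $\pi'$ produces a directed cycle through $s^*$ containing a positive arc. Every vertex on this cycle mutually reaches every other via portions of the cycle, so the cycle lies inside a single equivalence class of $S/_\sim$; we have thus exhibited a positive arc inside that class, contradicting Theorem \ref{opg}.

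\emph{For the backward direction} I would assume no advancement path has tail $s^*$ and suppose, for contradiction, that $s^*$ is not maximal. Then some $s'$ satisfies $s' \succ s^*$, meaning a path $\pi$ from $s^*$ to $s'$ exists but no path from $s'$ back to $s^*$ does. By the hypothesis on $s^*$, $\pi$ is not an advancement path, so each of its arcs is neutral; since neutral arcs come in pairs, reversing $\pi$ arc by arc yields a path from $s'$ back to $s^*$, contradicting $s' \succ s^*$.

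\emph{The main obstacle} is the forward direction, and specifically the observation that composing the advancement path with any return path fabricates a positive cycle sitting entirely inside one equivalence class of $S/_\sim$---exactly the configuration Theorem \ref{opg} rules out. Once that bridge to ordinal acyclicity is in place, both implications reduce to routine bookkeeping about arcs and their reverses.
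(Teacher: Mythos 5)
Your proposal is correct and follows essentially the same route as the paper's own proof: both directions hinge on the same two observations, namely that concatenating an advancement path out of $s^*$ with any return path would create a positive cycle forbidden by Theorem \ref{opg}, and that an all-neutral path can be reversed arc by arc to defeat strict dominance. The only cosmetic difference is that you argue the forward direction by direct contradiction where the paper proves its contrapositive, and you spell out the reversal of the neutral path that the paper leaves implicit.
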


\begin{proof}
Suppose that there exists such a path whose tail is $s^*$ and let $s$ be its head. Then clearly $s \succeq s^*$. We claim then that, in fact, $s \succ s^*$, for if any path exists whose head is $s^*$ and whose tail is $s$, then concatenating the advancement path from $s^*$ to $s$ and the path from $s$ to $s^*$ gives a positive cycle, contradicting the assumption that $\Gamma$ is an ordinal potential game.

Suppose now that $s^*$ is not maximal. Then there exists $s$ such that $s \succ s^*$. We claim then that, in fact, the path from $s^*$ to $s$ is an advancement path. For if it is not, then it must be a neutral path (that is, a path all of whose arcs are neutral), which would contradict that $s \succ s^*$.
\end{proof}

\begin{theorem}
\label{qwerporeiroierurux}
Let $O(\mathcal{G})$ be the class of ordinal potential games. Then the solution rule $\phi : O(\mathcal{G}) \rightarrow 2^{S}$ that selects for each $\Gamma \in O(\mathcal{G})$, the set of strongly maximal equilibria of $\Gamma$ satisfies existence, one person rationality, and consistency.
\end{theorem}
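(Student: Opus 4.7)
The plan is to verify the three properties in sequence. Existence is immediate from Corollary \ref{existence_strong}. One-person rationality follows from a direct inspection of the ordinal deployment graph. The bulk of the work goes into consistency, which rests on two observations: the reduced game inherits an ordinal potential, and any would-be advancement path in the reduced game can be lifted to one in the original game.

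For one-person rationality, I would consider a one-player game with utility $u$ over strategies $S$. The ordinal deployment graph has a direct arc $s \to s'$ exactly when $u(s') \geq u(s)$, so reachability coincides with weak dominance in utility. A state $s^*$ is therefore maximal iff no state of strictly higher utility exists, i.e., iff $s^*$ maximizes $u$. Since any two utility-maximizers are connected by neutral arcs in both directions and are trivially pure Nash equilibria, they all sit together in a single strongly maximal equilibrium class. Hence $\phi$ returns exactly the payoff-maximizing profiles.

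For consistency, I would fix $\Gamma \in O(\mathcal{G})$, a proper subcoalition $J \subset I$, and a strongly maximal equilibrium $s^*$ of $\Gamma$. First, I would show that the reduced game $\Gamma^{J, s^*}$ remains an ordinal potential game: if $P$ is an ordinal potential for $\Gamma$, then $\tilde P(\sigma) := P(\sigma, s^*_{-J})$ is an ordinal potential for $\Gamma^{J, s^*}$, since every unilateral deviation by some $j \in J$ in the reduced game is literally a unilateral deviation in $\Gamma$ with the coordinates of the players outside $J$ pinned to $s^*_{-J}$, and the payoffs to $j$ coincide with the $w_j$ of the reduced game by construction. Second, I would show that $s^*_J$ is strongly maximal in $\Gamma^{J, s^*}$. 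Invoking Lemma \ref{help_lemma} on $\Gamma$, no advancement path in $\Gamma$'s ordinal deployment graph has $s^*$ as its tail. But any advancement path in the reduced game starting at $s^*_J$ lifts to a path in $\Gamma$'s ordinal deployment graph starting at $s^*$ by appending $s^*_{-J}$ to every vertex; because the deviating players in $J$ see identical payoffs along the lifted and original paths, arcs remain arcs and positive arcs remain positive. Such a path would contradict the maximality of $s^*$, so $s^*_J$ is maximal in $\Gamma^{J, s^*}$; since $\Gamma^{J, s^*} \in O(\mathcal{G})$, Theorem \ref{opg} then upgrades maximality to strong maximality.

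The main obstacle is the consistency clause, and within it the lifting step. The argument is conceptually straightforward once one sees that deviations by a subcoalition are literally deviations in the full game against a fixed background, but care is needed: the preorder $\succeq$ is defined via reachability in the whole graph, so one must pass through Lemma \ref{help_lemma} to convert the global maximality hypothesis on $s^*$ into a purely local statement about advancement paths originating at $s^*$, which in turn cleanly lifts between the reduced and full games.
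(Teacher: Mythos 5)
Your proof is correct and follows essentially the same route as the paper's: existence via Corollary \ref{existence_strong}, one-person rationality from the fact that in a one-player game the drifting preference relation reduces to the payoff order so maximal states are exactly the payoff maximizers, and consistency by observing that the reduced game's ordinal deployment graph sits inside the original's, so that Lemma \ref{help_lemma} transfers the absence of advancement paths out of $s^*$ down to $s^*_J$. The only cosmetic difference is that you certify $\Gamma^{J,s^*} \in O(\mathcal{G})$ by restricting the potential $P(\cdot, s^*_{-J})$ directly, whereas the paper deduces it from the absence of positive cycles in the subgraph via Theorem \ref{opg}; both are sound.
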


\begin{proof}
Corollary \ref{existence_strong} implies existence. One-person rationality follows by the fact that the drifting preference relation (used in the definition of strongly maximal states) is {\em rational} in one-player games (that is, reflexive, transitive, and complete). To show consistency we follow a graph-theoretic argument the main idea of which is that since ordinal potential games are ordinally acyclic, they remain so if players outside of any subcoalition are restricted to a strongly equilibrium strategy.

More precisely, let $\Gamma \in O(\mathcal{G})$, let $J$ be a proper subcoalition of the player set of $\Gamma$, and let $s^* \in \phi(\Gamma)$. By Theorem~\ref{opg}, $\Gamma$'s ordinal deployment graph contains no positive cycles. Furthermore, the ordinal deployment graph of $\Gamma^{J, s^*}$ is a subgraph of $\Gamma$'s ordinal deployment graph. Therefore, $\Gamma^{J, s^*}$'s ordinal deployment graph contains no positive cycles either, which implies, also by Theorem~\ref{opg}, that $\Gamma^{J, s^*}$ is an ordinal potential game (that is, $\Gamma^{J, s^*} \in O(\mathcal{G})$). 

It remains to show that if $s^*$ is a strongly maximal equilibrium of $\Gamma$, then, for any proper subcoalition $J$ of players, $s^*_J$ is a strongly maximal equilibrium of $\Gamma^{J, s^*}$. Since $s^*$ is maximal in $\Gamma$ there is no advancement path in $\Gamma$'s ordinal deployment graph whose tail is $s^*$. But since the advancement graph of $\Gamma^{J, s^*}$ is a subgraph (using the standard definition as a subset of vertices and their incident arcs) of $\Gamma$'s advancement graph, the set of paths in $\Gamma^{J, s^*}$ is a subset of the set of paths in $\Gamma$. Therefore, there is no advancement path in $\Gamma^{J, s^*}$'s ordinal deployment graph whose tail is $s^*$ either, and, thus, Lemma \ref{help_lemma} implies $s^*_J$ is a strongly maximal equilibrium of $\Gamma^{J, s^*}$.
\end{proof}

\subsection{Weak acyclicity and weak ordinal acyclicity}

The definition of ``weak acyclicity,'' formalized in a study on the evolution of (social) conventions \cite{Peyton2}, precedes the definition of the sink equilibrium, however, it can also be couched in sink-equilibrium (and, therefore, also maximality) terms: A game in strategic form is called weakly acyclic if the only recurrent states of the Markov chain whose transition graph has arcs corresponding to strictly profitable unilateral deviations are pure Nash equilibria \cite{Fabrikant2}. Rephrasing this definition, a game is called weakly acyclic if weakly maximal states are pure Nash equilibria. 

Weak acyclic (like acyclicity, a notion inherently related to the aforementioned FIP) is closely related to the notion of {\em dynamics} in strategic-form games. In a weakly acyclic game, {\em better response dynamics,} that is, dynamics wherein players respond to the current state by unilaterally taking, in randomly selected turns, better responses, are ensured to converge to a weakly maximal equilibrium state (a pure Nash equilibrium). The importance of weak acyclicity is further argued by related, in this vein, results on dynamics in strategic-form games, namely, that in weakly acyclic games {\em no-regret dynamics} are also guaranteed to converge to a pure Nash equilibrium \cite{Peyton2, MYAS}.

In a manner analogous to the definition of weak acyclicity, we have the following definition.

\begin{definition}
A game in strategic form is called {\em weakly ordinally acyclic} if its strongly maximal states are necessarily strongly maximal equilibria.
\end{definition}

We use this definition in the sequel to analyze the aforementioned election mechanism. Before proceeding with the analysis, we provide a result related to the weak acyclicity of the stag hunt.

\section{Stag hunts are weakly acyclic games}
\label{stag_hunt_is_weakly_acyclic}

The stag hunt has three Nash equilibria, the aforementioned pure equilibria corresponding to universal defection and universal adoption as well as one Nash equilibrium in mixed strategies, which (as mentioned earlier) is generally ignored in the literature. To a large extent this adds support to our aforementioned argument that mixed strategies are not well-suited to serve as an analytical framework for the class of games that are of interest in this paper, which motivated our focus on alternative analytical techniques based on pure-strategy dynamics that formalize the empirical notion of incremental deployability capturing the essence of how the term is used by the networking community. It is natural then to ask if the stag hunt has maximal states other than the Nash equilibria of universal adoption and universal defection. In this section, we show that it doesn't. Figure \ref{saldkjfnalxdkhjff} demonstrates by means of an example of a stag hunt wherein a transient cycle appears, that the stag hunt does not have the FIP, implying it is not a generalized ordinal potential game. However, in the following theorem, we show that cycles in stag hunts are necessarily transient.

\begin{figure}[tb]
\centering
\includegraphics[width=5.5cm]{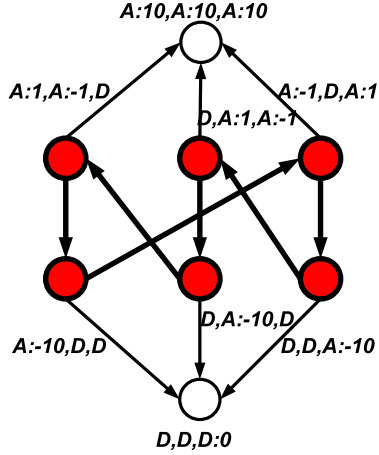}
\caption{\label{saldkjfnalxdkhjff}
Example of a stag hunt with a cycle. The figure shows the deployment graph of the game as well as the payoffs corresponding to each profile. There are three players. The payoff to a player that uses $D$ is $0$ irrespective of what other players do. The payoff of playing $A$ is shown in the figure for each profile in the game.}
\end{figure}

\begin{theorem}
\label{asldkfklfkjjasdp}
Stag hunts are weakly acyclic games.
\end{theorem}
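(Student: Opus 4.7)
The plan is to prove weak acyclicity by exhibiting, from any strategy profile $s$ of a stag hunt $\Gamma$, an explicit finite sequence of strictly profitable unilateral deviations that terminates at a pure Nash equilibrium. The construction has two phases --- an \emph{ascent} in which defectors switch to $A$, followed by a \emph{descent} in which adopters switch to $D$ --- and its correctness hinges on the monotonicity built into the stag hunt definition (the payoff of $A$ to an adopter is a weakly increasing function of the set of players choosing $A$), together with the fact that the $D$-payoff is the constant $c$.

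In the ascent phase, I would start at $s$ and repeatedly pick any defector $j$ with $u_j(A, s_{-j}) > c$ and have $j$ switch to $A$. Each such move is strictly improving and strictly enlarges the adopter set, so the phase terminates in at most $n$ steps at a state $s^\star$ with the property that no defector strictly benefits from switching to $A$. If $s^\star$ is already a Nash equilibrium --- in particular, if $s^\star = A^n$ --- we are done. Otherwise some adopter $i$ at $s^\star$ must have $u_i(s^\star) < c$ and can strictly profit by switching to $D$, so we enter the descent phase.

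In the descent, I would iteratively have any such unhappy adopter switch to $D$. The invariant I would maintain is that throughout the descent, no defector strictly benefits from switching to $A$. This is immediate at $s^\star$, and is preserved by each descent move: removing an adopter shrinks the adopter set, so for any remaining defector $j$ the hypothetical payoff $u_j(A, \cdot)$ evaluated at the new (smaller) adopter-set-after-$j$-switches is weakly smaller than it was before the move; since that payoff did not exceed $c$ before, it does not after either. Each descent move strictly decreases the number of adopters, so after at most $n$ steps we either hit a state at which no adopter strictly benefits from defecting --- which, combined with the invariant, is a pure Nash equilibrium --- or we exhaust the adopters and arrive at $D^n$, which is a strict Nash equilibrium by the stag hunt's definition. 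Concatenating the ascent and descent produces the required finite improvement path from $s$ to a pure Nash equilibrium, establishing weak acyclicity.

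The main obstacle is the invariant in the descent phase. Naively one might worry that peeling off an adopter could open up a fresh strict incentive for some defector to switch to $A$, breaking the clean two-phase structure and leaving us chasing our tail between ascents and descents. Ruling this out is exactly what the monotonicity clause buys us, but it must be invoked carefully: one has to observe that the relevant ``switch-to-$A$'' payoff $u_j(A, s_{-j})$ for a defector $j$ is evaluated on the current (shrinking) adopter set, and so can only weakly decrease as the descent progresses; the symmetric worry on the ascent side --- that an adopter's incentive to defect could vanish in a non-monotone way --- is irrelevant because the ascent is driven by defectors, not adopters.
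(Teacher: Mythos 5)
Your proof is correct; it is essentially the paper's argument run in mirror image. Both proofs exhibit a two-phase strict improvement path from an arbitrary profile, with correctness resting on the monotonicity of the $A$-payoff in the adopter set, but the paper orders the phases the other way: it first switches to $D$ every adopter who prefers $D$ (reaching either $D^n$ or a state in which all remaining adopters are content), and then ascends to $A^n$ by repeatedly switching in a defector who prefers $A$, with monotonicity guaranteeing that previously content adopters stay content. To know that such a defector always exists during its ascent, the paper leans on the characterization that the only pure Nash equilibria of a stag hunt are $A^n$ and $D^n$; your version avoids this by carrying the descent invariant (no defector strictly gains by switching to $A$, preserved because the adopter set only shrinks) and simply stopping at whichever pure Nash equilibrium the path reaches. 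That makes your argument slightly more self-contained and robust --- it goes through even under the ``weakly increasing'' reading of the monotonicity clause that you adopt, where intermediate weak pure equilibria could exist, whereas the paper's appeal to the two-equilibrium characterization implicitly requires strict monotonicity. Either way the path has length at most $2n$, and both arguments establish weak acyclicity in the sense the paper uses it (every profile admits a finite strict improvement path terminating at a pure Nash equilibrium).
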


\begin{proof}
We show that starting from any state we are ensured, by following profitable unilateral devisions, to reach either $A^n$ or $D^n$, where $n$ is the number of players. So let us start from an arbitrary state other than $A^n$ and $D^n$ and let us iteratively switch to $D$ all $A$-players that prefer $D$ to $A$. This iterative process may lead to $D^n$. If it does not, then a state is reached in which all $A$-players prefer $A$. Since the only pure Nash equilibria are $A^n$ and $D^n$, a $D$-player must exist that prefers $A$. Let us switch such a player to $A$. Then, by the definition of the stag hunt, all $A$-players in this new state prefer $A$ to $D$. Unless this new state is $A^n$, there must again exist an $D$-player that prefers $A$. Continuing in this fashion, the state $A^n$ is ensured to be reached.
\end{proof}

We note that, using our aforementioned definition of the stag hunt, the distinction between weakly and strongly maximal states vanishes as all unilateral deviations are, by definition, either (strictly) profitable or harmful, and, therefore, the stag hunt is also weakly ordinally acyclic. We believe it is also worth considering variants of the stag hunt where the aforementioned assumption is lifted (in the sense that if a player switches from the noncooperative to the cooperative strategy not all players having adopted the cooperative strategy (strictly) benefit), but we leave the corresponding analysis of such (plausible in some settings) variants as future work.

\section{Analysis of the insurance and election mechanisms}
\label{Analysis}

In this section, we analyze the insurance and election mechanisms from the perspective of Nash equilibrium, dominance solvability, and maximality. In our analysis, we find that the insurance mechanism provides somewhat stronger analytical guarantees, however, in practice it may be somewhat less attractive than the election mechanism (largely due to the requirement of an insurance carrier committing, if not spending, a potentially significant budget to insure deployment efforts).

\subsection{Analysis of the insurance mechanism}

The following analysis of the insurance mechanism is based on the assumption that corresponding contracts (between the carrier and potential adopters) are designed such that, in the event of a coordination failure, the insurance carrier offers a reimbursement amount to an insured player that slightly exceeds that player's deployment cost. Observe that this mechanism induces a game where each player has three strategies (instead of two), namely, to adopt without purchasing insurance (strategy $A$), to defect without purchasing insurance (strategy $D$), and to purchase insurance (strategy $X$). Recall that once a player purchases insurance she is bound by the respective contract to bear the deployment cost. Under the previous assumptions, we have the following theorem.

\begin{theorem}
\label{sid_result}
The insurance mechanism induces a game solvable by strict iterated dominance whose solution is universal adoption without purchasing insurance.
\end{theorem}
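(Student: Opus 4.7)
The plan is to run two rounds of strict dominance elimination that collapse every player's three-strategy set to the singleton $\{A\}$. First I would write down the induced game explicitly: each player has strategies $\{A, D, X\}$, with payoffs for $A$ and $D$ inherited unchanged from the basis stag hunt (since the insurance scheme is assumed not to alter the underlying payoff structure), and with payoffs for $X$ given by the deployment payoff $\beta_i(k) - \gamma_i$ augmented by the reimbursement $\gamma_i + \epsilon$ whenever fewer than $n$ players deploy (i.e.\ whenever coordination fails), net of the (implicit, small but strictly positive) premium $\delta$ paid to the carrier.

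The first round of elimination would show that $X$ strictly dominates $D$ for every player $i$. Strategy $D$ always yields the constant $c$, so it suffices to verify that $X$ yields strictly more than $c$ against every opponent profile $s_{-i}$. If every other player deploys, then playing $X$ places the system at $A^n$ and delivers $u_i(A^n) > c$ by the definition of the stag hunt; if at least one other player defects, the insurance triggers and the insured payoff is $\beta_i(k) + \epsilon - \delta$, which the coverage is chosen to keep strictly above $c$ uniformly in $k \in \{1, \ldots, n-1\}$ (this is exactly the content of ``slightly exceeding the deployment cost,'' tightened so as to dominate the worst-case loss $\gamma_i - \beta_i(1)$ of a lone adopter).

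The second round operates on the reduced game where every player's strategy lies in $\{A, X\}$. In every profile of this reduced game there is universal deployment, so coordination succeeds, the insurance never triggers, and the only distinction between $A$ and $X$ is the premium $\delta > 0$ paid under $X$. Hence $A$ strictly dominates $X$, a second elimination leaves the singleton $\{A\}$ for each player, and iterated strict dominance terminates at the unique profile $A^n$, which is universal adoption without any player having purchased insurance.

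The main obstacle is not the combinatorial structure of the argument but the joint choice of the coverage parameter $\epsilon$ and the premium $\delta$: $\epsilon$ must be large enough that the insured payoff in the worst failure case ($k = 1$) still exceeds $c$ after subtracting $\delta$, while $\delta$ must be strictly positive so that the second round of elimination is genuinely \emph{strict} (without $\delta > 0$, $A$ and $X$ tie in the reduced game and the strict iterated dominance argument stalls with a continuum of equivalent equilibria). A condition of the form $0 < \delta < \epsilon - (c + \gamma_i - \beta_i(1))$ suffices, and reconciling this with the paper's informal ``slightly exceeds'' wording is the only quantitative step that deserves care.
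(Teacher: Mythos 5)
Your proposal is correct and follows the same two-round elimination as the paper: first $X$ strictly dominates $D$ because the reimbursement under coordination failure slightly exceeds the deployment cost, and then, with $D$ gone, $A$ strictly dominates $X$ because the premium is the only remaining difference. The only addition is that you make explicit the quantitative condition on the coverage $\epsilon$ and premium $\delta$ that the paper's phrase ``slightly exceeds'' leaves implicit; this is a reasonable tightening, not a different argument.
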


\begin{proof}
Observe that strategy $X$ strictly dominates strategy $D$ as it yields a higher payoff irrespective of the strategies of other players (by the assumption that reimbursement under coordination failure slightly exceeds the deployment cost). Once $D$ is eliminated, strategy $A$ strictly dominates strategy $X$, as strategy $X$ yields a payoff lower than that corresponding to universal adoption due to the insurance premium from the players to the insurance carrier. The theorem, therefore, follows.
\end{proof}

Observe in the previous proof that universal adoption is not a dominant strategy but rather two rounds of elimination are required to obtain universal adoption as the solution. Let us now analyze the insurance mechanism from the perspective of maximality. We have the following theorem.

\begin{theorem}
The insurance mechanism induces a game wherein the only (strongly and weakly) maximal state is universal adoption without purchasing insurance.
\end{theorem}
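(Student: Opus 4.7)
The plan is to bootstrap from Theorem~\ref{sid_result} together with the characterization of maximality developed earlier. First I would observe that once the induced three-strategy game is shown to be solvable by strict iterated dominance with solution $A^n$, the profile $A^n$ must be the \emph{unique} pure Nash equilibrium of that game. This uses the standard and easy-to-verify fact that a strictly dominated strategy cannot be played in any pure Nash equilibrium, and that this property propagates through successive elimination rounds (since each reduced game has the same Nash equilibrium set as its predecessor when only strictly dominated strategies are discarded). Thus the conclusion of Theorem~\ref{sid_result} upgrades to: the induced game has a unique pure Nash equilibrium, namely universal adoption without insurance.

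Next I would invoke the relationships between maximality and Nash equilibrium that were stated in the preceding sections. Every weakly maximal state is a pure Nash equilibrium (they in fact coincide, since a state with a strictly profitable unilateral deviation is strictly dominated in the strict preorder $\succeq$ by the deviation outcome). Every strongly maximal state, a fortiori, is also a pure Nash equilibrium, because a strict incentive to deviate creates an asymmetric arc in the ordinal deployment graph. Combining these two observations with the uniqueness conclusion of the previous paragraph immediately gives the containments $\{\text{weakly maximal states}\} \subseteq \{A^n\}$ and $\{\text{strongly maximal states}\} \subseteq \{A^n\}$.

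Finally I would close the argument by appealing to Theorem~\ref{existence}, which guarantees that both sets of maximal states are nonempty in any strategic-form game. Together with the two containments above, this forces both sets to be exactly $\{A^n\}$, which is the statement of the theorem.

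There is essentially no analytical obstacle here, since the theorem is a corollary of Theorem~\ref{sid_result} once we exploit the general facts about maximality established earlier. The only place where one must be a little careful is the first step: one should justify that iterated strict dominance does not destroy pure Nash equilibria (so that the singleton output of the elimination procedure really coincides with the whole Nash set), rather than assert it without comment. Beyond that, the proof is a short chain of set inclusions combined with an existence appeal.
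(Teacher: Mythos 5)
Your first step (iterated strict dominance preserves the set of pure Nash equilibria, so Theorem~\ref{sid_result} gives $A^n$ as the unique pure Nash equilibrium of the induced game) is fine and matches the paper. The gap is in your second step: the claim that every weakly maximal state is a pure Nash equilibrium, ``since a state with a strictly profitable unilateral deviation is strictly dominated in the strict preorder by the deviation outcome,'' is false in general. An arc from $s$ to $s'$ in the strict deployment graph only gives $s' \succeq s$; for $s' \succ s$ one also needs that there is \emph{no} path back from $s'$ to $s$. If $s$ lies on a cycle of profitable deviations that forms a sink component of the deployment graph, then $s$ and $s'$ communicate, neither dominates the other, and $s$ is weakly maximal without being a Nash equilibrium. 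This is precisely why the notion of weak acyclicity is nontrivial, why the paper devotes Section~\ref{stag_hunt_is_weakly_acyclic} to proving that stag hunts are weakly acyclic, and why Figure~\ref{saldkjfnalxdkhjff} exhibits a stag hunt with a better-response cycle. The same objection applies to your claim that strongly maximal states are necessarily Nash: a positive arc out of $s$ does not yield dominance if a return path exists in the ordinal deployment graph. (What is true by definition is that strongly maximal \emph{equilibria} are Nash; but the theorem is about maximal \emph{states}, and the inclusion you need goes the other way.)

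Consequently, uniqueness of the Nash equilibrium does not by itself rule out additional maximal states sitting inside non-equilibrium sink components. The missing ingredient, which is the substantive part of the paper's proof, is to show that the induced game is weakly (ordinally) acyclic by exhibiting, from every state $s$, an improvement path to $A^n$: switching from $D$ to $X$ is strictly profitable for any $D$-player (the reimbursement slightly exceeds the deployment cost), and once every player plays $A$ or $X$ all players adopt, so each $X$-player strictly gains by dropping the premium and moving to $A$. Since $A^n$ is a strict Nash equilibrium it has no outgoing arcs, so $A^n$ strictly dominates every other state in both preorders, and no other state can be maximal. Your argument as written skips this reachability step entirely; with it added, the proof closes and coincides with the paper's.
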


\begin{proof}
As mentioned earlier, in a game that is solvable by strict iterated dominance, the (necessarily) unique outcome of iterated elimination is the unique Nash equilibrium of the game. Therefore, Theorem \ref{sid_result} implies, $A^n$, where $n$ is the number of players, is the unique Nash equilibrium of the induced game. Therefore, universal adoption is a weakly maximal state. Furthermore, universal adoption is also strongly maximal as any unilateral deviation from $A^n$ is harmful. (Unilaterally deviating from $A$ to $X$ is harmful due to the insurance premium whereas unilaterally deviating from $A$ to $D$ is harmful by the definition of the stag hunt.) It remains to show that the induced game is weakly acyclic. To that end, we show that starting from any state, say $s$, $A^n$ is incrementally deployable against $s$. But this is a straightforward implication of the fact that switching from $D$ to $X$ is (strictly) profitable for the corresponding player, and, once all players have switched to $X$, it is also (strictly) profitable for each player to switch from $X$ to $A$. This completes the proof.
\end{proof}

\subsection{Analysis of the election mechanism}

In this section, we analytically demonstrate the election mechanism's efficacy.

\subsubsection{Preliminaries}

The election mechanism induces a game with four strategies: to adopt without voting (strategy $A$), to defect without voting (strategy $D$), to vote and adopt if the outcome of the vote is positive defecting otherwise (strategy $X$), and to vote and adopt irrespective of the outcome of the vote (strategy $Y$). In fact, there are two additional possible strategies that involve the possibility of breaking the commitment that, by design of the mechanism, a positive vote implies; assuming a large enough penalty in the event of breaking the commitment, we may safely ignore these strategies.

\subsubsection{Dominance solvability}

From the perspective of dominance solvability, we have the following theorem.

\begin{theorem}
The election mechanism induces a game that is dominance solvable by weak iterated dominance whose solution corresponds to the strategy profile wherein all players adopt a combination of strategies $X$, that is, to vote and adopt if the outcome of the vote is positive defecting otherwise, and $Y$, that is, to vote and adopt irrespective of the outcome of the vote. 
\end{theorem}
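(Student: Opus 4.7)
The plan is to exhibit two weak dominance relations---$X$ weakly dominates $D$ and $Y$ weakly dominates $A$, for every player---eliminate $D$ and $A$ from each player's strategy set, and then observe that every profile in the reduced game lies in $\{X,Y\}^n$, which by the design of the mechanism triggers universal adoption and thus yields the cooperative payoff $u_i(A^n)$ to every player.

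For the comparison of $X$ against $D$, I would split the opponents' profile into two cases. If at least one other player plays $A$ or $D$, then whether player $i$ plays $X$ or $D$ the vote outcome is $0$, no commitment is triggered, and under either strategy player $i$ defects; the payoffs coincide. If instead every other player plays $X$ or $Y$, then playing $D$ leaves the outcome at $0$ and yields the hare payoff $c$, whereas playing $X$ flips the outcome to $1$, commits every voter to adopt, and produces the payoff $u_i(A^n) > c$. Hence $X$ weakly dominates $D$, with strict inequality on the all-vote subprofile.

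The analogous case analysis establishes that $Y$ weakly dominates $A$. When some other player does not vote, both $A$ and $Y$ lead to election outcome $0$ with player $i$ adopting in either case; the set of adopters is identical and the payoffs agree. When every other player votes, playing $A$ leaves the outcome at $0$, causing any $X$-playing opponents to defect and leaving player $i$ adopting in the midst of defectors---harmful by the stag-hunt assumption that unilateral deviations from $D^n$ are harmful---whereas playing $Y$ pushes the outcome to $1$ and realizes $u_i(A^n)$. The inequality is strict as soon as at least one opponent plays $X$, so $Y$ weakly dominates $A$.

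Applying the two dominance relations to every player, the reduced game has profile space $\{X,Y\}^n$. I would also check that the order of elimination is immaterial: removing either $A$ or $D$ first leaves the remaining dominance argument intact, since the critical strictly-profitable deviation depends only on the existence of the all-vote subprofile, which is preserved under either elimination order. In every profile of the reduced game every player votes, so the outcome is $1$, the commitment activates, and every player adopts, yielding $u_i(A^n)$. The main obstacle will be the bookkeeping in the case analysis---particularly ensuring that the ``mixed'' opponent profiles (arbitrary combinations of $A$, $D$, $X$, $Y$ among the others) never sneak in a violation of the weak inequality---and a secondary conceptual subtlety is that weak iterated dominance does not collapse the game to a single profile, since $X$ and $Y$ are payoff-equivalent in the residual game; the ``solution'' is therefore the entire class $\{X,Y\}^n$, all of whose elements nonetheless implement universal adoption.
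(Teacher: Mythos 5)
Your proof is correct and follows essentially the same route as the paper's: establish that $X$ weakly dominates $D$ and that $Y$ weakly dominates $A$, then observe that the surviving profiles in $\{X,Y\}^n$ all produce a positive vote and hence universal adoption with payoff $u_i(A^n)$. Your case analysis is actually more explicit than the paper's (which eliminates $D$ first and then argues the $Y$-over-$A$ dominance in the reduced game); the only small slip is that the strictness of $Y$ over $A$ when all opponents vote and at least one plays $X$ follows from the monotonicity of the adoption payoff in the number of adopters, not from the assumption that unilateral deviations from $D^n$ are harmful, which literally applies only when every opponent plays $X$.
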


\begin{proof}
Note first that strategy $X$ weakly dominates strategy $D$: Since voting and defecting if the outcome of the vote is negative, switching from $D$ to $X$ implies that the respective player cannot lose, but also (strictly) benefits if all players choose $X$. We may, therefore, eliminate strategy $D$. The remaining strategies are $A$, $X$, and $Y$. Note now that strategy $Y$ weakly dominates strategy $A$: Selecting strategy $Y$ provides the same payoff to the respective player as $A$ except in the event that some other player has chosen strategy $X$, in which case strategy $Y$ provides a (strictly) higher payoff that $A$ since the outcome of the vote becomes negative implying the players having chosen $X$ defect, therefore, also implying a lower payoff to the adopters by the definition of the stag hunt. We may, therefore, also eliminate $A$. The strategies that remain are $X$ and $Y$.
\end{proof}

Note that if all players adopt a combination of $X$ and $Y$, the outcome implies universal adoption. Note further that by the narrow definition of dominance solvability (that a unique strategy remains for each player) the election mechanism does not qualify to be called as such, however, the previous theorem predicts manifestation of the desirable outcome in the basis stag hunt game.

\subsubsection{Maximality}

Let us start the analysis of the election mechanism with respect to maximality with an example in a stag hunt with two players. The incentive structure of the induced game as well as the strongly maximal solutions are illustrated in Figure \ref{exem}. As shown in the figure, all strongly maximal states entail adoption of the cooperative strategy, however, universal defection remains a (weak) pure Nash equilibrium. This point is discussed further at the end of this section. In general stag hunt games, the efficacy of the election mechanism is argued by the following theorem.

\begin{figure}[tb]
\centering
\includegraphics[width=9cm]{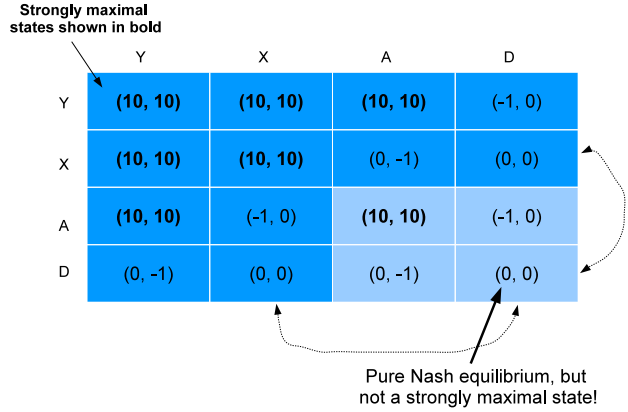}
\caption{\label{exem}
Example of incentive structure the election mechanism induces in a (two player) stag hunt.}
\end{figure}

\begin{theorem}
The election mechanism induces a game that is weakly ordinally acyclic all of whose strongly maximal equilibria imply manifestation of the superior outcome of the basis stag hunt game.
\end{theorem}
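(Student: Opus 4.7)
The plan is to identify the strongly maximal equilibrium class with the \emph{universal adoption states} of the induced election game, namely those in which every player ends up adopting. Because voting affects payoffs only through whether $X$-players adopt, these are precisely the states where either (i) every player plays $X$ or $Y$ (the vote is unanimous, so even $X$-voters adopt), or (ii) every player plays $A$ or $Y$ (no one defects at all); in either case every player receives $u_i(A^n)$, the maximum achievable. I would establish three properties of this collection: each such state is a pure Nash equilibrium, they are pairwise connected by neutral arcs in the ordinal deployment graph, and no weakly profitable unilateral deviation leaves the collection.

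The Nash property and the closure property follow from essentially the same case check. From any universal adoption state, a unilateral deviation either preserves universal adoption (and is therefore neutral, keeping the state in the class) or breaks it, in which case the deviator becomes either a non-unanimous-vote $X$-defector whose payoff drops to $c < u_i(A^n)$, or a stranded adopter whose payoff drops below $u_i(A^n)$ by strict monotonicity of the stag-hunt adopter payoff in the number of adopters (the drop occurs because the deserted $X$-voters now defect). For the neutral-arc connectivity, the moves $X \leftrightarrow Y$ within an all-$X/Y$ state and $A \leftrightarrow Y$ within an all-$A/Y$ state suffice, together with the bridge $(X, Y, \ldots, Y) \leftrightarrow (A, Y, \ldots, Y)$ that swings between the two families through $Y^n$. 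Hence the universal adoption states form a single $\sim$-equivalence class of pure Nash equilibria with no outgoing arcs --- a strongly maximal equilibrium class.

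Next I would show that from any non-universal-adoption state $s$ there is an advancement path into this class, which together with closure establishes that no other state is strongly maximal. Mimicking the argument used for Theorem~\ref{asldkfklfkjjasdp}, I would greedily switch each $D$-player to $X$: each such switch is neutral while the vote remains non-unanimous (the deviator continues to defect at payoff $c$) but becomes strictly profitable precisely when it creates unanimity, raising the deviator from $c$ to $u_i(A^n)$. If the $D$-players run out before unanimity, some $A$-players remain, and the analogous iteration $A \to Y$ is neutral until the final $A$ switches, at which point unanimity is reached and the deviator's payoff jumps strictly to $u_i(A^n)$. The terminal state is in the universal adoption class, so by closure there is no return path, and $s$ is strictly dominated by it in $\succeq$.

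The main obstacle is the bookkeeping in the case analysis --- in particular the boundary behavior when the \emph{last} non-voter switches, which is what converts an otherwise neutral walk into a genuine advancement path and, dually, is the only place where a deviation out of the universal adoption class could threaten to be weakly profitable. Once that boundary is handled carefully, weak ordinal acyclicity and the claim that every strongly maximal equilibrium implies the superior outcome both follow directly.
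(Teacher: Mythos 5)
Your proof is correct and follows the same overall architecture as the paper's: exhibit, from every state in which some player ends up defecting, a weakly profitable path into the set of universal-adoption profiles, and show that this set is a single communicating class of pure Nash equilibria with no outgoing non-harmful arcs. There is, however, one substantive difference, and it works in your favor. After the $D$-players have been routed through $X$, the paper disposes of the remaining $A$-players by switching them to $X$ as well, asserting that such a player ``never loses.'' That assertion is delicate: an $A$-player currently adopting alongside enough co-adopters to earn more than the defection payoff $c$ would, upon switching to $X$ while the vote is still non-unanimous, become a defector and drop to $c$ --- a strict loss, so the needed arc need not exist in the ordinal deployment graph. Your alternative of switching the remaining $A$-players to $Y$ avoids this entirely: the deviator adopts both before and after, the adopter set never shrinks, so every intermediate step is neutral and only the final step (which makes the vote unanimous and brings the $X$-players on board) is positive. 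You also make explicit two things the paper leaves as asides: the exact characterization of the universal-adoption class (all-$X/Y$ or all-$A/Y$, bridged through $Y^n$) and the closure property that no non-harmful unilateral deviation exits the class, which is what actually certifies that no outside state can be strongly maximal and that the strongly maximal states are genuine equilibria (weak ordinal acyclicity). The only cosmetic nit is that your dichotomy for deviations that break universal adoption should also name the deviator who switches to $D$ outright (who likewise falls to $c$); this changes nothing.
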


\begin{proof}
Consider the ordinal deployment graph and a state, say $s$, such that one or more players select strategy $D$. Observe that $D$-players never lose by switching from $D$ to $X$, and, therefore, switching all $D$-players in this way implies an outcome, say $s'$, that is incrementally deployable against $s$. Let us now focus attention to the players in $s'$ that select $A$. Since such players do not vote, players that select $X$ defect. However, switching $A$-players to $X$ implies that they never lose and in fact benefit once all players have been switched from $A$ to $X$. Call this latter state $s''$, and note that $s''$ is incrementally deployable against $s'$ whereas $s'$ is not so against $s''$. Since the ordinal preference relation is transitive, this implies that $s''$ is incrementally deployable against $s$ and, it is easy to observe, that $s$ is not so against $s''$. Therefore, strategy $D$ never manifests in any state that is (strongly) maximal. Note finally that $A^n$ is also a strongly maximal state: Switching players from $X^n$ to $Y^n$ implies no loss in payoff, and, therefore, that $Y^n$ is incrementally deployable against $X^n$. Furthermore, switching players from $Y^n$ to $A^n$ also implies no loss in payoff, and therefore that $A^n$ is incrementally deployable against $Y^n$ and by transitive also against $X^n$. Thus, $A^n$, combinations of $A$ and $Y$, and combinations of $X$ and $Y$ are strongly maximal equilibria belonging to the same equilibrium class. Finally observe that as noted above no state that includes $D$ is maximal.
\end{proof}

\subsubsection{Some remarks}

We note that universal defection remains a pure Nash equilibrium in the game that the election mechanism induces and, therefore, universal defection is also a weakly maximal state. We also note, however, that universal defection is a {\em weak} Nash equilibrium after the election mechanism is introduced in contrast to being a {\em strict} Nash equilibrium in the basis game (the stag hunt on which the election mechanism is applied), and naturally the predictive value of weak Nash equilibria is (in general) weaker than those of their strict counterparts. In the particular setting we consider, the ``evolutionary path'' from universal defection to universal adoption is one where no player loses during the transition while all players benefit after the transition has taken place. 

We finally note that although weak iterated dominance predicts that all players will vote in favor of adoption (implying that the mechanism will commit them to do so), the analysis based on strong maximality is inconclusive on whether players will vote; the outcome of universal adoption without participating in the voting procedure is as plausible an outcome as being committed to adoption by a universal vote. However, both solution concepts (namely, weak iterated dominance and strong maximality) agree on the final outcome that universal adoption will manifest (whether by means of the commitment mechanism or not). Empirical observation of the outcome in experimental settings is, therefore, of natural interest, and an interesting direction for future work.

\section{Discussion}
\label{discussion}

We started out this paper motivating our contributions using examples of innovation failures in the Internet architecture such as the lack of the deployment of countermeasures against routing attacks in BGP and the extremely slow pace of replacing IPv4 with IPv6. In both cases, the success of the emerging technologies (BGPsec and IPv6, respectively) depends on positive externalities in the sense that early adopters do not benefit. Then we attributed these failures to an equilibrium selection phenomenon using a game-theoretic formulation, in particular, the stag hunt, and we presented techniques aiming to facilitate selection of the superior equilibrium in such games. We should note, however, that equilibrium changes do sometimes happen in practice using ``evolutionary forces'' alone (without any exogenous mechanism being in place to facilitate such a transition). 

For example, Facebook, a social network whose success depends on positive externalities, succeeded in attracting users from its predecessors such as MySpace, although Facebook's early adopters did not receive any significant benefit from signing up and interacting with a sparse community of early users. These transitions admit game-theoretic explanations, using, for example, equilibrium selection theories such as that introduced by Harsanyi and Selten \cite{HS}. According to these authors, which equilibrium is selected in a stag hunt largely depends on two factors, namely, {\em risk} and {\em benefit} that both weigh in in predicting the outcome players choose. (Note that in a stag hunt there is no risk in defecting, as defection guarantees a constant payoff irrespective of what other players do, however, playing the cooperative strategy entails the possibility of a loss if sufficiently many other players do not cooperate.) If the benefit is too large compared to the risk, then an equilibrium transition from the inferior to the superior equilibrium becomes possible.

In this vein, the efforts of networking researchers to address the Internet's innovation slump by devising increasingly better technologies are justifiable in the sense that as the performance and security, for example, characteristics of emerging technologies improve, their adoption becomes more likely using evolutionary forces alone. However, technological advances are hard to project in the future, as is the probability of these designs being adopted in practice (owing to a lack of precise predictions by equilibrium selection theories). Coordination mechanisms, in contrast, can be designed to facilitate such transitions without significantly, if at all, depending on the precise characteristics of the risk and payoff structure of adoption environments.

This observation is akin to a phenomenon that manifests in road traffic transportation networks. In such networks, engineers have observed that the input traffic, for example, in the road traffic transportation of major cities, has the desirable property of adjusting (without exogenous interventions) to congestion levels. Nevertheless, this adaptation can be slow, and, therefore, exogenous interventions, such as {\em tolls,} are used as a means of controlling congestion extensively around the world. Tolls bear an analogous functionality as the coordination mechanisms that we propose in this paper: The model in widespread use by researchers in civil engineering and operations research to predict traffic patterns in transportation systems is one proposed by Wardrop whose predictions are known as {\em Wardrop equilibria}. This model admits a game-theoretic formalization where Wardrop equilibria correspond to Nash equilibria in a class of games known as {\em non-atomic congestion games}. Noting that this model was popularized in the computer science literature by the work of Roughgarden and Tardos \cite{HowBadisSelfishRouting} on the price of anarchy \cite{PriceOfAnarchy}, tolls serve a similar function to that coordination mechanisms aim to achieve in this paper, namely, they serve as a means of selecting a more desirable equilibrium than the one evolutionary forces alone predict. In this sense, the notion of tolls qualifies as a game-theoretic mechanism, albeit, in principle, more intrusive than the mechanisms we present in this paper, which are by nature based on a ``opt-in'' philosophy.

We believe that auctions, tolls, and coordination mechanisms are design artifacts that are in a sense {\em dual} to technological artifacts as they are in a position to complement or even replace such technological counterparts toward the common objective of engineering better social outcomes. For example, coming back to our previous example on road traffic transportation networks, instead of establishing a charging system based on tolls, transportations authorities may alternative decide to increase the carrying capacity, at a potentially significantly higher cost, however. Naturally, analogous design tradeoffs can manifest in a variety of social planning environments.

\section{Other related work}
\label{other_rel_work}

The problem of effecting or diffusing innovation in social systems has been the subject of extensive scrutiny primarily by social scientists but also more recently by computer scientists. The term {\em innovation} assumes various possible manifestations in the literature from agricultural practices, social norms (such as which hand to extent in a handshake), medical drugs, commercial products (such as fax machines and cellphones), to networking technologies such as secure versions of BGP (like BGPsec) and quality-of-service capabilities in IP networks. In a variety of diffusion models, potential adopters are assumed to interact according to network structures that are broadly referred to as social networks (whether networks based on kinship or friendship or online social networks enabled by the widespread penetration of the Internet). An excellent reference tracking the history of research in this area but also concerned with recent contributions is by Rogers \cite{Diffusion}.

Various researchers, especially computer scientists, have recently been concerned with an algorithmic problem related to the diffusion of innovation typically referred to as {\em influence maximization} initially studied by Domingo and Richardson \cite{Domingos, Richardson}. Influence maximization refers to selecting an initial set of adopters in an adoption environment characterized by network structure such that the corresponding innovation will spread, by forces of evolution characterizing the respective environment, to an as large fraction of the population as possible (possibly the entire population). (The diffusion model typically assumed in this line of work is based on a notion of ``value'' the innovative technology or behavior yields that, for each potential adopter, is dependent on the state of other potential adopters.) Kempe {\em et al.} \cite{Spread} as well as Goldberg and Liu \cite{Goldberg2} pose this problem as a combinatorial optimization problem, shown to be NP-hard, and devise approximation algorithms in obtaining efficient solutions (that have more recently been improved in followup studies). This line of research leaves open, however, the issue of how initial adopters are enticed to adopt.

In contrast to these earlier works, our model of evolution based on the stag hunt, although it was motivated by the diffusion of (Internet-based) networking technologies that have a natural network structure, does not make specific assumptions on the structure of the adoption environment other than adopters benefitting from the adoption decisions of others. In spite of its generality, we show that coordination mechanisms incite widespread diffusion while at the same time escaping the fundamental question (central in previous work) of kickstarting the diffusion process, which we achieve by the mere act of instituting coordination mechanisms in adoption environments.

Related to the problem of innovation diffusion is that of inciting {\em collective action,} a problem whose study was initiated in the seminal work of Olson \cite{Olson}. The study of collective action was later taken up by several authors such as, for example, in the form of {\em critical mass theories} (bearing relevance to nuclear fission explosions in physics) \cite{OMT, Markus}. Medina \cite{Medina} motivates the study of collective action in settings bearing a political nature related to citizen oppression in rogue states, however, models of collective action range from crowdsourcing in the Internet to enabling countermeasures against climate change (such as controlling carbon emissions). Although both Olson and Medina discuss the problem of inciting collective action, the idea of inducing (institutional) mechanisms to that effect (as we do in this paper) admitting a rigorous game-theoretic analysis has not, to the extent of our knowledge, been considered before by these earlier works.

\section{Conclusion and future work}
\label{conclusion}

Motivated by the innovation slump in the Internet's basic architecture, in this paper we presented a notion of game-theoretic mechanisms, unlike those used and analyzed in mechanism design theory (whose main goal is to entice a truthful revelation of unknown to the designer preferences), aiming instead to incite equilibrium transitions in environments whose incentive structure can be modeled after the stag hunt, a well-known coordination game that has found applications in various social and economic settings. We proposed two specific instances of such coordination mechanisms, an insurance and an election mechanism, that we thoroughly analyzed using a variety of game-theoretic solution concepts. Our analytical techniques were motivated and developed by colloquial notions of deployability in the Internet, which we rigorously developed and placed squarely within the broader game-theoretic analytical apparatus. We believe that a fruitful direction for further research is to empirically study and validate the efficacy of these mechanisms. Many practical questions remain to be analyzed in this vein such as, for example, the extent to which potential adopters being given the opportunity to purchase an insurance policy from a carrier would purchase insurance policies from that carrier and whether adopters may exist that would prefer to adopt by eschewing the purchase of insurance as the strict iterated dominance analysis predicts. Naturally, analogous questions can be raised in the presence of an election mechanism; would potential adopters prefer to vote in this event or rather adopt eschewing the possible commitment that voting may imply?

\section*{Acknowledgments}

I would like to thank various members of the Erato project at the National Institute of Informatics (NII) in Tokyo, Japan for attending a presentation related to the topic of this paper and offering helpful suggestions. Earlier versions of this paper have benefitted from discussions with Jon Crowcroft, Miltos Anagnostou, and Jen Rexford. I would, finally, like to thank an anonymous reviewer for simplifying my proof that stag hunts are weakly acyclic games.

\bibliographystyle{plain}
\bibliography{coord}

\end{document}